\documentclass[10pt,journal,twoside]{IEEEtran}
\usepackage{graphicx}
\usepackage{epstopdf}
\usepackage{placeins}
\usepackage{array}
\DeclareGraphicsExtensions{.eps,.pdf} 
\graphicspath{ {figures/} }
\usepackage{cite}
\usepackage{balance}
\usepackage{amssymb,amsmath}
\usepackage{subfigure}
\usepackage[ruled,vlined]{algorithm2e}
\usepackage{amssymb,amsmath,mathtools}
\usepackage{amssymb,amsmath}
\usepackage{algorithmic}
\usepackage{color}
\usepackage{multirow}
\usepackage{stackrel,amssymb,amsmath}
\usepackage{empheq}
\usepackage{cases}

\newcolumntype{P}[1]{>{\centering\arraybackslash}p{#1}}

\DeclarePairedDelimiter{\nint}\lfloor\rceil
\DeclarePairedDelimiter{\ceil}{\lceil}{\rceil}
\makeatletter
\newcommand\restartchapters{\par
  \setcounter{chapter}{0}%
  \setcounter{section}{0}%
  \gdef\@chapapp{\chaptername}%
  \gdef\thechapter{\@arabic\c@chapter}}
\makeatother

\def\bq{{\bf q}}

\def\bw{{\bf w}}

\newtheorem{remark}{\it \underline{Remark}}

\newtheorem{lemma}{\it \underline{Lemma}}

\newtheorem{proposition}{\it \underline{Proposition}}

\renewcommand{\algorithmicrequire}{\textbf{Input:}}

\newcommand \ile {\stackrel{\mathclap{\normalfont\mbox{i}}}{\le}}
\newcommand \iieq {\stackrel{\mathclap{\normalfont\mbox{i2}}}{=}}

\newcommand \iiile {\stackrel{\mathclap{\normalfont\mbox{i3}}}{\le}}
\newcommand \iiiile {\stackrel{\mathclap{\normalfont\mbox{i4}}}{\le}}

\newcommand \ifiveeq {\stackrel{\mathclap{\normalfont\mbox{i5}}}{=}}
\newcommand \isixle {\stackrel{\mathclap{\normalfont\mbox{i6}}}{\le}}
\newcommand \isevenle {\stackrel{\mathclap{\normalfont\mbox{i7}}}{\le}}

\setlength{\textfloatsep}{8pt} 
\makeatletter
\g@addto@macro\normalsize{%
 \setlength\abovedisplayskip{4pt}
 \setlength\belowdisplayskip{4pt}
 \setlength\abovedisplayshortskip{4pt}
 \setlength\belowdisplayshortskip{4pt}
}
\makeatother

\makeatletter
\def\endthebibliography{%
	\def\@noitemerr{\@latex@warning{Empty `thebibliography' environment}}%
	\endlist
}
\makeatother

\begin{document}
\bstctlcite{IEEEexample:BSTcontrol}
\title{\LARGE Satellite- and Cache-assisted UAV: A Joint Cache Placement, Resource Allocation, and Trajectory Optimization for 6G Aerial Networks }
\author{\normalsize
\IEEEauthorblockN{$\text{Dinh-Hieu Tran}, \textit{Graduate Student Member, IEEE}$,  $\text{Symeon Chatzinotas}, \textit{Senior Member, IEEE},$ $\text{and Bj{\"o}rn Ottersten}, \textit{Fellow, IEEE}$ }
\thanks{Dinh-Hieu Tran, Symeon Chatzinotas, and Bj{\"o}rn Ottersten are with the Interdisciplinary Centre for Security, Reliability and Trust (SnT), the University of Luxembourg, Luxembourg. (e-mail: \{hieu.tran-dinh,  symeon.chatzinotas, bjorn.ottersten\} @uni.lu).}
\thanks{Corresponding author: Dinh-Hieu~Tran (e-mail: hieu.tran-dinh@uni.lu).}
\thanks{}
\thanks{}
}
\maketitle
\thispagestyle{empty}
\pagestyle{empty}
\begin{abstract}
This paper considers LEO satellite- and cache-assisted UAV communications for content delivery in terrestrial networks, which shows great potential for next-generation systems to provide ubiquitous connectivity and high capacity. Specifically, caching is provided by the UAV to reduce backhaul congestion, and the LEO satellite supports the UAV's backhaul link. In this context, we aim to maximize the minimum achievable throughput per ground user (GU) by jointly optimizing cache placement, the UAV's resource allocation, and trajectory while cache capacity and flight time are limited. The formulated problem is challenging to solve directly due to its non-convexity and combinatorial nature. To find a solution, the problem is decomposed into three sub-problems: (1) cache placement optimization with fixed UAV resources and trajectory, followed by (2) the UAV resources optimization with fixed cache placement vector and trajectory, and finally, (3) we optimize the UAV trajectory with fixed cache placement and UAV resources. Based on the solutions of sub-problems, an efficient alternating algorithm is proposed utilizing the block coordinate descent (BCD) and successive convex approximation (SCA) methods. Simulation results show that the max-min throughput and total achievable throughput enhancement can be achieved by applying our proposed algorithm instead of other benchmark schemes.
\end{abstract}

\begin{IEEEkeywords}
 6G, caching, non-terrestrial, resource allocation, satellite, unmanned aerial vehicle (UAV).
\end{IEEEkeywords}

\section{Introduction} \label{Introduction}
Although fifth-generation (5G) wireless systems are being deployed around the world \cite{Computerweekly}, the explosive growth of mobile data traffic still poses significant challenges for future networks, i.e., beyond 5G or 6G. It is predicted that individual user data rates will exceed 100 Gbps by 2030, and overall mobile data traffic will reach 5016 exabytes per month \cite{Tariq6G}. To overcome these challenges, the research community is working towards a sixth-generation (6G) system \cite{saad2019vision,Zhen6GLEOSatIoT}. {Notably, the integration of satellite, aerial, and terrestrial networks is promoted as a key factor in providing high-capacity and ubiquitous connectivity  for 6G \cite{saad2019vision,giordani2020non,Mozaffari6G}. }

{Satellite communication (Satcom) has received considerable attention from both industry and academia due to its ability to provide wide-area coverage, e.g., telemedicine, military, satellite-assisted maritime communication, rescue missions, and disaster management system (DMS)} \cite{LiSatUAV2020,ShreeSat2020,LeiSat2020}. Essentially, satellites are installed in geostationary earth orbit (GEO), medium earth orbit (MEO), and/or low earth orbit (LEO), which can complement and support terrestrial communication networks. {Compared to its GEO and MEO counterparts, LEO Satcom operates at much lower altitudes, i.e., from 160 km to 2000 km \cite{ESALEO}, and it provides lower path losses and transmission latency.} Therefore, many projects such as SpaceX, SPUTNIX, OneWeb, and Kepler plan to launch thousands of LEO satellites for providing globally seamless and high throughput communications cooperating with terrestrial communications \cite{DiLEOSatellites}. {Because of these benefits, many works have studied the hybrid LEO satellite-terrestrial communication networks \cite{YouMassiveLEO,ZhangIoTLeO,DengLEO}.} 

Besides many advantages, Satcom is not without limitations. {One satellite can cover a very large area and thus it can improve network performance by acting as a relay or a base station (BS) to provide communication services to isolated areas, e.g., ocean, desert, severe areas.} However, only ground users (GUs) are equipped with expensive high-gain antennas to benefit from satellites. {Other users within satellite coverage cannot take advantage of Satcom's broadband services since they do not have high-gain antennas \cite{LiSatUAV2020}.} In contrast to Satcom, the UAV can fly at much lower altitudes, and thus it can serve as a flying BS to serve GUs due to its flexibility, ease of deployment, and maneuverability. Herein, the satellite can provide a stable backhaul link to the UAV when terrestrial infrastructures are unavailable or have been destroyed after a disaster \cite{saad2019vision,LiSatUAV2020}. Recently, many works investigated the combination of satellite and UAVs in the integrated satellite-UAV-terrestrial network (ISUTN) \cite{LiSatUAV2020,JiaIoTUAVLEO,HuSatUAV2020,HuangSatUAV2020}. {While \cite{LiSatUAV2020,JiaIoTUAVLEO,HuSatUAV2020} only considered dominated LoS channel in UAV communications, \cite{HuangSatUAV2020} proposed a general analytic framework for energy-efficient beamforming scheme of a satellite-aerial-terrestrial network (SATN).} Specifically, satellite-UAV links follow Shadowed-Rician fading while the UAV-user links undergo Rician fading model. Despite noticeable achievements for UAV communications \cite{LiSatUAV2020,JiaIoTUAVLEO,HuSatUAV2020,HuangSatUAV2020}, aforementioned works do not take caching into consideration.

Recent studies have shown that some popular contents are requested repeatedly by users, causing the majority of data traffic \cite{BreslauCaching1999,SajadCache}. {Notably, UAVs only serve GUs by connecting to the content server via a wireless backhaul.} {However, with the explosive growth of the data traffic, the backhaul link can be overloaded due to limited capacity, reducing the user experience. Caching of popular content at the network edge has emerged as a solution to effectively eliminate bottlenecks to backhaul links, especially during peak traffic time. Furthermore, caching techniques can prolong the UAV's lifetime since the UAV can pre-store popular files in its cache to prevent repeated requests from GUs on backhaul links. Therefore, many works have extensively studied the benefits of caching in UAV communications  \cite{JiUAVCache,ChaiUAVcache2020,XuCacheUAV,HieuBack2020, ChengUAVcaching2019,ZhaoUAVcache2018}. Ji et al. \cite{JiUAVCache} maximized minimum throughput among GUs by jointly optimizing cache placement, UAV trajectory, and transmission power. Chai et al. \cite{ChaiUAVcache2020} proposed an online cache-enabled UAV wireless design through jointly optimizing UAV trajectory, transmission power, and caching scheduling. In \cite{XuCacheUAV}, the authors utilized proactive caching at GUs to overcome the UAV endurance issue via jointly optimizing the caching policy, UAV trajectory,  and communication scheduling. Specifically, all the files are cached cooperatively at specified GUs.} If a file is requested, it can be served locally if that GU contains the requested file; otherwise, it can be retrieved from neighboring users through device-to-device (D2D) communications. In \cite{HieuBack2020}, the authors adopted caching and backscatter communication (BackCom) to improve UAV lifetime. {Specifically, they aimed to maximize total throughput via jointly optimizing the backscatter coefficient, dynamic time splitting ratio, and the UAV trajectory with linear and non-linear energy harvesting models. In \cite{ChengUAVcaching2019}, Cheng et al. proposed a new scheme to ensure the security of UAV wireless networks through jointly optimizing the UAV trajectory and time schedule. For GUs with caching capabilities, the UAV could broadcast files to them and prevent eavesdropping. For GUs without caching, the authors maximized the minimum average secrecy rate to guarantee GUs' security. In \cite{ZhaoUAVcache2018}, Zhao et al. investigated caching- and UAV-assisted secure transmission for scalable videos in hyper-dense networks via interference alignment. In this work, both UAVs and small-cell base stations (SBSs) were equipped with caches to store popular videos at off-peak hours. To ensure secure transmission, the idle
SBSs were exploited to generate jamming signal to
disrupt eavesdropping.}

{Unlike previous works that only investigated the integration between satellite and UAV \cite{JiaIoTUAVLEO,HuSatUAV2020,HuangSatUAV2020} or UAV and caching \cite{JiUAVCache,ChaiUAVcache2020,XuCacheUAV,HieuBack2020, ChengUAVcaching2019,ZhaoUAVcache2018}, there are very few works on satellite- and cache-assisted UAV communication networks in the literature \cite{ShushiSatUAVcache,GuSatUAVCaching2020}. Shushi et al. \cite{ShushiSatUAVcache} studied the energy-aware coded caching design cache-enabled satellite-UAV vehicle integrated network (CSUVIN). Notably, the authors in \cite{ShushiSatUAVcache} aimed to minimize the total energy consumption of GEO satellite and UAV but did not solve the problem of maximizing the minimum throughput of GUs.  In \cite{GuSatUAVCaching2020}, the authors investigated a satellite-UAV mobile edge caching system in IoT networks, where the IoT users acted as relays to transfer information from satellite to UAV due to the assuming that there was no direct satellite-to-UAV transmission but stable links existed between the satellite and sensor users.} Further, they considered limited storage capacity for the IoT users but not as a caching model for content-based networks. {In the literature, reference \cite{JiUAVCache} is the most relevant to our work. Nevertheless, \cite{JiUAVCache} considered one-tier data transmission between UAVs to GUs. \textit{In our work, we investigate a two-tier data transmission system model including satellite-to-UAV and UAV-to-user data transmissions.} This is motivated by the fact that the UAV only can pre-store a portion of popular content during off-peak hours and then forward this information to GUs. Due to the limited storage cache size, the UAV cannot store all the files in its cache. In the case that a GU demands content but is not stored in UAV's cache, the UAV will ask the LEO satellite to send the required file on the backhauling link. Moreover, reference \cite{JiUAVCache} assumed that the transmitter only serves up to one requester at a time slot, which is an inefficient method. In this work, we assume that \textit{the UAV can serve multiple GUs simultaneously to improve network performance, i.e., max-min throughput.} Motivated by these observations, our work proposes a novel system model in satellite- and cache-assisted UAV wireless networks that further explores the impact of data transmission latency for backhaul link from satellite-to-UAV due to large distance and limited resource allocation, which makes the problem design even more challenging to solve and has not been investigated before. \textit{To the best of our knowledge, this is the first work that maximizes the GUs' minimum throughput via jointly optimizing UAV's bandwidth, UAV's transmit power, trajectory design, and cache placement in LEO satellite- and cache-assisted UAV wireless networks}.} In summary, our contributions are as follows:
\begin{itemize}
	\item  We propose a novel satellite- and cache-assisted UAV communication system. {Specifically, caching techniques can reduce the burden on the backhaul link during peak times and prolong the UAV's lifetime.} In addition, the LEO satellite helps to deliver requested content that is not cached by the UAV.
	
	\item We formulate an optimization problem to maximize the minimum throughput at GUs, subject to the UAV's limited operation time, the UAV's maximum speed, the UAV's trajectory, limited cache capacity of the UAV, bandwidth allocation for the access link from the UAV to GUs, transmit power allocation at the UAV to transmit data of GU $k$. The formulation belongs to a mixed-integer nonlinear programming (MINLP) problem, which is challenging to solve. Especially, the binary nature of caching decision-making variables makes it more troublesome.	
	
	\item {We solve the problem by decomposing it into three sub-problems:} (1) cache placement optimization with given UAV resources (i.e., bandwidth and transmit power) and trajectory, followed by (2) the UAV bandwidth and transmit power optimization with given cache placement vector and trajectory, and finally, (3) we optimize the UAV trajectory with given cache placement vector and UAV resources. Based on the sub-problems' obtained solutions, we propose an iterative algorithm to solve the non-convex optimization problem by adopting the block coordinate descent (BCD) method and successive convex approximation (SCA) techniques.
	
	\item {The proposed algorithm's effectiveness is shown via simulation results. In particular, our approach yields an improvement up to 26.64$\%$, 79.79$\%$, and 87.96$\%$ in max-min throughput compared to the benchmark scheme one (BS1), benchmark scheme two (BS2), and benchmark scheme three (BS3), respectively.} More specifically, BS1, BS2, and BS3 are designed similarly to the proposed algorithm but with fixed trajectory, no caching capability, and fixed resource allocation, respectively.
\end{itemize}


The rest of the paper is organized as follows. The system model and problem formulation are given in Section~\ref{System_Model}. The proposed iterative algorithm for solving satellite- and cache-aided UAV communications is presented in Section~\ref{sec:3}. Numerical results are depicted in Section~\ref{Sec:Num}, and Section~\ref{Sec:Con} concludes the paper.

\emph{Notation}: Scalars and vectors are denoted by lower-case letters and boldface lower-case letters, respectively. For a set $\mathcal{{K}}$, $|\mathcal{{K}}|$ denotes its cardinality. For a vector $\bf v$, $\left\| \bf v \right\|_1$ and $\left\| \bf v \right\|$  denote its $\ell_1$ and Euclidean ($\ell_2$) norm, respectively. $\mathbb{R}$ represents for the real matrix. $\mathbb{R}^+$ denotes the non-negative real numbers, i.e., $\mathbb{R}^+=\{x \in \mathbb{R}|x \ge 0\}$. $x \sim {\cal{CN}}(0,\sigma^2)$ represents circularly symmetric complex Gaussian random variable with zero mean and variance $\sigma^2$. Finally, $\mathbb{E} [x]$ denotes the expected value of $x$. $\ceil[\big]{x}$ and $\nint{x}$ define a ceiling and nearest integer function of a number $x$, respectively.

\begin{table*}
	\caption{List of Notation}\label{table1}
	\centering
	\begin{tabular}{|p{5cm}||p{10cm}|}
		\hline
		Notations & Descriptions         \\ \hline
		${\cal K}$                & The set of $K$ ground users, $k \in {\cal K}$ \\ \hline
		${\cal N}$                & The set of $N$ time slots, $n \in {\cal N}$  \\ \hline
		${\cal F}$                & The set of $F$ contents in the network, $f \in {\cal F}$  \\ \hline
		$T$                & Total flying time of the UAV  \\ \hline
		$\delta_t$                & Duration of one time slot $n$, with $n \in {\cal N}$ \\ \hline
		$\delta_d$                & Maximum distance that the UAV can travel during time slot $n$, with $n \in {\cal N}$ \\ \hline
		$\bw_k \in {\mathbb{R}^{2 \times 1}}, k \in {\cal K}$ & Cartersian coordinates of the GU $k$ \\ \hline
		$\bq_n \in {\mathbb{R}^{2 \times 1}}, k \in {\cal K}$ & The horizontal location of the UAV during time slot $n$, with $n \in {\cal N}$ \\ \hline
		$H_u$ & The flying altitude of the UAV \\ \hline $d_{1k}$ & Distance between satellite and the UAV
		\\ \hline $d_{2k}^n$ & Distance between the UAV and GU $k$ during time slot $n$
		\\ \hline
		$h_{1k}^n$ & Channel coefficient between satellite and the UAV  to transmit the demanded data for GU $k$ \\ \hline
		$h_{2k}^n$ & Channel coefficient between the UAV and GU $k$ \\ \hline
		$\eta_f$ & Binary variable $\eta_f \in \{0,1\}$ indicates the UAV caches content $f$ or not \\ \hline
		$b_{2k}^n$ & $0 \le b_{2k}^n \le 1$ denotes the allocation bandwidth for the UAV to transmit GU k's requested content during time slot $n$ \\ \hline
		$p_{2k}^n$ & The UAV transmit power to convey GU k's requested content during time slot $n$ \\ \hline
		$\beta_0$ & Channel gain at the reference distance, i.e., $d_0=1$ meter\\ \hline
		$S$ & Maximum number of contents that the UAV can cache \\ \hline
	\end{tabular}
\end{table*}

\begin{figure}[t]
	\centering
	\includegraphics[width=9cm,height=10cm]{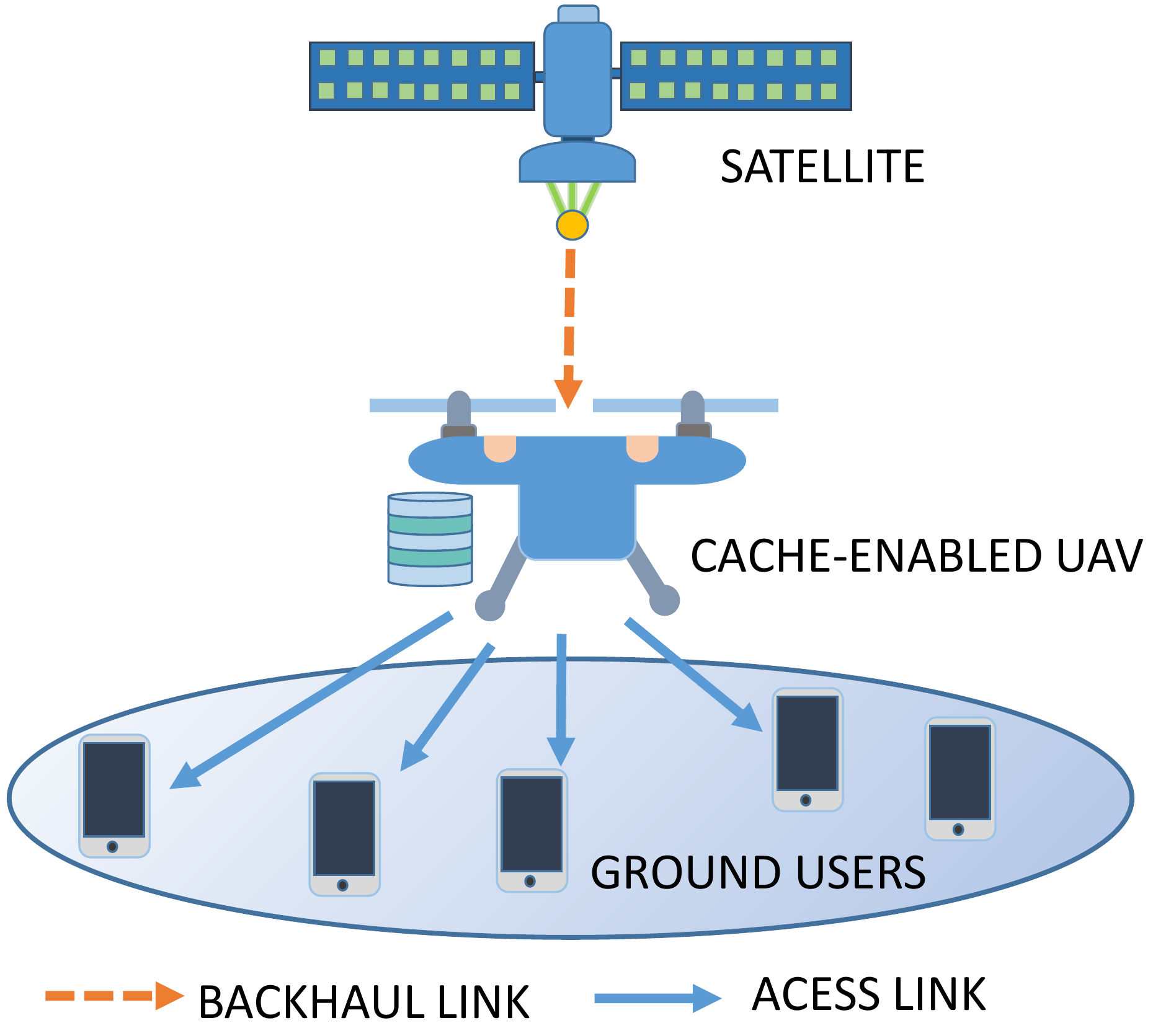}
	\caption { System model: The cache-aided UAV acts as a flying base station to transfer data to ground users, wherein the satellite can provide backhaul link to the UAV. }
	\label{fig:1}   
\end{figure} 
\section{System Model and  Problem Formulation} \label{System_Model}
We consider the downlink of a wireless communication system consisting of a LEO satellite constellation \cite{HuSatUAV2020}, a UAV acts as a flying base station (BS), and a set of $K$ ground users (GUs) denoted by ${\cal K}=\{1,\dots,k,\dots,K\}$. {All GUs are assumed to be within the coverage area of the considered LEO satellite, but it cannot take advantages of the benefits from broadband service since they are not equipped with expensive high-gain antennas \cite{LiSatUAV2020}.} Therefore, the UAV acts as an aerial BS to serve GUs on the access links (ALs), wherein it requests the demanded contents from satellite on the backhaul link (BL). Notably, the satellite backhaul for content delivery is selected because it can broadcast content to multiple UAVs. Further, we assume that the cells of multiple UAVs do not interfere either because they are geographically distant or they use different frequencies; thus, we focus on studying a single UAV. {To reduce the problem of congestion in the backhaul link, the UAV is equipped with a cache \cite{JiUAVcacheD2D,Tran2020FDUAV,JiUAVCache} and it can pre-store a part of popular contents from a ground base station (GBS) during off-peak hours and then forward this information to the GUs. We assume that the flight time of the UAV is $T$ due to the limited battery capacity.} For ease of analysis, $T$ is divided into $N$ equal time slot with duration $\delta_t$, denoted by ${\cal N} \triangleq \{1,\dots,n\dots,N\}$. Since $\delta_t$ is chosen sufficiently small, the distance between the UAV and GU $k$ is assumed to be unchanged during each time slot $n$ \cite{HieuUAVDynamic}. Moreover, we utilize the Cartesian coordinate system, whereas the location of GU $k$ is $\bw_k \in {\mathbb{R}^{2 \times 1}}, k \in {\cal K}$. {For ease of analysis, the UAV's altitude is assumed to be fixed at $H_u$ meters during the flight, e.g., this is imposed by the regulator for safety requirements.} The horizontal location of the UAV during time slot $n$ is denoted by $\bq_n = [x_n,y_n]$. Key notations used in this work are listed in Table \ref{table1}.

\subsection{Caching Model}
The satellite is assumed to access to a set of $F$ contents, denoted by ${\cal F} \triangleq \{1,\dots,f,\dots,F\}$. All contents are assumed to have the same size of $Q$ bits. For different size contents, they can be split into several chunks with equal size \cite{LiSatUAV2020,ThangFDCache}. Therefore, this assumption is applicable in realistic scenarios. In this work, we consider Zipf distribution which is the most popular content popularity model \cite{BreslauCaching1999}. Then, the demanded probability for any content $f$ being requested by GU $k$ is given by
{\begin{align}
	\label{eq:Zipf}
	P_{k,f} = \frac{f^{-\varrho}}{\sum\limits_{i=1}^F i^{-\varrho}},
\end{align}
where $\varrho$ is the Zipf skewness factor with $0 < \varrho < 1$ and $\sum\limits_{f \in {\cal F}}P_{k,f}=1$.}

We assume that the UAV can proactively cache a part of popular contents in its storage during off-peak hours. Specifically, this paper utilizes the placement-then-delivery method. Notably, cache placement is also a part of the optimization problem, which is performed offline before the UAV takes off. Due to the limited storage cache size, the UAV cannot store all files in its cache. {In case a GU requests content but is not cached by the UAV, the UAV will ask the satellite to send the required file on the BL.} Let us define $\boldsymbol{\eta} \triangleq\{\eta_f, \forall f \in {\cal F}\}$ as the cache placement vector of the UAV, where $\eta_f \in \{0,1\}$. Specifically, $\eta_f=1$ means that the UAV caches content $f$, thus the GU $k$ can directly obtain file $f$ from the UAV. Otherwise, $\eta_f^n=0$ indicates that the UAV does not cache file $f$, it thus makes a request from the satellite to serve any GU $k$. {Without loss of generality, once the UAV has received the requested content of the device $k$, then it will begin transmitting the requested content to that user.}\footnote{In this paper, we use a (decode-and-forward) DF relaying technique \cite{hieu2018performance}; therefore, the UAV needs to complete getting the requested data from the satellite before relaying to GU $k$ to ensure the properly encoded data.. } 

{To overcome the problem of wireless BL congestion during peak hours, the UAV is equipped with $SQ$ bits storage capacity, where $S$ is the maximum number of contents that can be stored at the UAV.} Due to the limited cache size, the UAV cannot store all contents, thus we have $S<F$. Moreover, each GU demands content independently. Considering the limited cache storage, the caching constraint can be given by
\vspace{0.2cm}
\begin{align}
	\label{eq:15}
	\sum\limits_{f \in {\cal F}} \eta_f  \le S.
\end{align}


\subsection{Channel Model}
Let $V_{\max}$ be the maximum UAV speed, leading to the following constraints:
\begin{align}
	\label{eq:1}
	\left\| \bq_n - \bq_{n-1} \right\| \le \delta_d = V_{\rm max}{\delta _t},n=1,\dots,N,
\end{align}
\begin{align}
	\label{eq:2}
	\bq_0 = \bq_{\rm I}, \bq_N=\bq_{\rm F},
\end{align}
where $\delta_d$ denotes the maximum distance that UAV can travel during any time slot $n$; $\bq_{\rm I}$ and $\bq_{\rm F}$ are the start point and end point of the UAV.

For analytical convenience, let us denote the satellite, UAV, and GU $k$ by $s$, $u$, and $k$, respectively. {Henceforth, $1k$ and $2k$ are respectively denote the BL (i.e., $s \to u$) and AL (i.e., $u \to k$) to transmit the GU $k$'s demanded data.} Therefore, the distance from $u \to k$ is expressed as

\begin{align}
	\label{eq:3}
	d_{2k}^n = \sqrt{H_u^2+\|\bq_n - \bw_k \|^2}.
\end{align}

We assume that the UAV's altitude is high enough; thus, the channel between satellite and UAV is dominated by LoS link with a limited number of scattered path \cite{HuangSatUAV2020,ZhaoSatUAV2018}. Propagation measurement in \cite{JiangSat2010} mentions that the majority of the total signals transmitted from a satellite to a receiver is predominant by the LoS link. In this regard, the channel coefficient between $s \to u$ at time slot $n$ to transmit GU $k$'s data is expressed as 
$h_{1k}^n \triangleq \beta_0 (d_{1k}^n)^{-\alpha}$, where $\beta_0$ is the channel gain at a reference distance, i.e., $d_0=1$ meter, $\alpha$ is the path loss exponent, $d_{1k}^n$ denote the distance between the UAV and satellite at time slot $n$. Thanks to recent developing techniques, the Doppler shift at the satellite and UAV are assumed to be estimated and compensated accurately \cite{EssaadaliUAVDoppler,NaeemSateDoppler}. {We assume that the satellite starts from an initial point and moves along the x-axis with velocity $v_s$ \cite{HuSatUAV2020}.} Specifically, orbital velocity $v_s$ can be calculated as \cite{Orbital_parameters}
\begin{align}
	\label{eq:sat_vel}
	v_s = \sqrt{\frac{G M}{(R_e + H_s)}},
\end{align}
where $G = $ 6.67259 $10^{-11}$ $\rm m^3kg^{-1}s^{-2}$ is the Universal Constant of Gravitation, $M =$ 5.9736  $10^{24}$ kg is the mass of the Earth, $R_e=$ 6371 km is the radius of the Earth, $H_s$ is altitude of the satellite above the Earth's surface. 


In practice, GUs operate in a variety of environments, e.g., suburban, urban, rural. Therefore, a generalized transmission model comprising of line-of-sight (LOS) and non-line-of-sight (NLOS) components is considered \cite{HieuUAVDynamic,Yaxiong,Tran2020FDUAV}. In this context, a practical channel model is given as follows
\begin{align}
	\label{eq:4}
	h_{2k}^n = \sqrt {\beta_{2k}^n} {\tilde h}_{2k}^n,
\end{align}
where $\beta_{2k}^n$ represents the large-scale fading due to distance or shadowing and ${\tilde h}_{2k}^n$ denotes the small-scale fading channel model. Specifically, $\beta_{2k}^n$ is modeled as
\begin{align}
	\label{eq:5}
	\beta_{2k}^n \triangleq \beta_0 (d_{2k}^n)^{-\alpha} = \frac{\beta_0}{(H_u^2+\|\bq_n - \bw_k \|^2)^{\alpha/2}},
\end{align}
where $d_{2k}^n \triangleq \frac{1}{(H^2+\|\bq_n - \bw_k \|^2)^{1/2} }$ is the distance between the UAV and GU $k$ at time slot $n$.

Refer to \cite{HieuUAVDynamic,Yaxiong,Tran2020FDUAV}, the small-scale fading ${\tilde h}_{uk}^n$ with ${\mathbb{E}}[|{\tilde h}_{2k}^n|^2]=1$, can be expressed as
\begin{align}
	\label{eq:6}
	{\tilde h}_{2k}^n = \sqrt{\frac{K}{1+K}} \bar{h}_{2k}^n + \sqrt{\frac{1}{1+K}} \hat{h}_{2k}^n,
\end{align}
where the Rician factor $K$ is defined as the ratio between the direct path's power and scattered paths' power. Besides, $\bar{h}_{2k}^n$ and $\hat{h}_{2k}^n$ denote the LoS and the NLoS (Rayleigh fading) channel components .

Let $x_{1k}^n$ and $x_{2k}^n$ denote the symbols with unit power  ($\mathbb{E} \left[|x_{1k}^n|^2\right]=1$ and $\mathbb{E} \left[|x_{2k}^n|^2\right]=1$)  sent $s \to u$ and $u \to k$ during time slot $n$, respectively. Consequently, the received signals at the UAV or GU $k$ can be expressed as
\begin{align}
	\label{eq:7}
	y_{ik}^n& = \sqrt{p_{ik}^n} h_{ik}^n x_{ik}^n + n_0, \; \text{with}\; i \in \{1,2\},
\end{align}
{where $p_{1k}^n$ and $p_{2k}^n$ respectively denote the transmit power of LEO satellite and the UAV on the BL link and AL for the data transmission of GU $k$'s at the $n$-th time slot;} $n_0 \sim {\cal{CN}}(0,\sigma^2)$ is the additive white Gaussian noise (AWGN) at the UAV or GU $k$. Notably, the interference caused by the link between $s \to u$ on the link $u \to k$ is neglected due to the limited satellite's transmit power and a very large distance \cite{LiSatUAV2020,HuSatUAV2020}.

Next, the achievable rate (in bps) from $s \to u$ and $u \to k$ to transmit the GU $k$'s data during the $n$-th time slot are respectively calculated as
\begin{align}
	\label{eq:8}
	r_{1k}^n &= B_{1k} \log_2 \left(1+\Gamma_{\rm 1k}^n\right),\\
	\label{eq:9}
	r_{2k}^n &= b_{2k}^n B_{2k} \log_2 \left(1+\Gamma_{\rm 2k}^n \right),
\end{align}
where $\Gamma_{\rm 1k}^n \triangleq \frac{ p_{1k}^n | h_{1k}^n |^2 } { \sigma^2}$, $\Gamma_{\rm 2k}^n \triangleq \frac{ p_{2k}^n {| {\tilde h}_{2k}^n |^2 \beta_0 }}{\left({H_u^2} + {{\left\| {\bq_n - \bw_k} \right\|}^2}\right)^{\alpha/2} \sigma^2}$, $B_{2k}$ denotes the total bandwidth from the UAV to GUs (in Hz); $B_{1k}$ and $b_{2k}^n B_{2k}$ respectively denote the allocated bandwidth for the BL and AL to transmit GU $k$'s data during time slot $n$. Without loss of generality, $b_{2k}^n$ is approximately continuous between 0 and 1. Particularly, since the channel coefficient $h_{2k}^n$ is a random variable, thus the achievable rate $r_{2k}^n$ is also a random variable. Consequently, we pay attention to get the approximated rate, are given as
\begin{align}
	\label{eq:10}
	\mathbb{E}[r_{2k}^n] &= b_{2k}^n B_{2k} \mathbb{E}[ \log_2 \big(1+\Gamma_{2k}^n\big)]).
\end{align}

{Because the closed-form expression of $\mathbb{E}[r_{2k}^n]$ is difficult to obtain, and thus an lower bound for $\mathbb{E}[r_{2k}^n]$ is adopted, which yields the following lemma:}
{\begin{lemma}\label{lemma:1}
	The lower bound of $\mathbb{E}[r_{2k}^n]$ is given as 
	\begin{align}
		\label{eq:Lemma1_2}
		\bar{r}_{2k}^n = b_{2k}^n B_{2k} \log_2 \Bigg(1+  \frac{e^{-E} p_{2k}^n \beta_0} {\left({H_u^2} + {{\left\| {\bq_n - \bw_k} \right\|}^2}\right)^{\alpha/2} \sigma^2} \Bigg).
	\end{align}
\end{lemma}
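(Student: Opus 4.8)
The plan is to exploit the fact that the only randomness in $\Gamma_{2k}^n$ enters through the small-scale fading term $|\tilde{h}_{2k}^n|^2$, while every other quantity is deterministic. I would first write $\Gamma_{2k}^n = c_{2k}^n\,|\tilde{h}_{2k}^n|^2$, where the deterministic gain $c_{2k}^n \triangleq \frac{p_{2k}^n \beta_0}{\left(H_u^2 + \|\bq_n - \bw_k\|^2\right)^{\alpha/2}\sigma^2}$ collects transmit power, path loss, and noise. The target becomes $\mathbb{E}[\log_2(1 + c_{2k}^n |\tilde{h}_{2k}^n|^2)]$, and the goal is to move the expectation through the logarithm in a controlled way so that the random $|\tilde{h}_{2k}^n|^2$ is replaced by the deterministic constant $e^{-E}$, after which the deterministic factor $b_{2k}^n B_{2k}$ is pulled out as in \eqref{eq:10}.

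The core step is a logarithmic change of variable together with Jensen's inequality. I would set $X \triangleq \ln\!\left(c_{2k}^n |\tilde{h}_{2k}^n|^2\right)$, so that $\log_2(1 + c_{2k}^n |\tilde{h}_{2k}^n|^2) = \log_2(1 + e^{X})$. Since the map $t \mapsto \log_2(1 + e^{t})$ is convex — its second derivative $\frac{1}{\ln 2}\frac{e^{t}}{(1+e^{t})^2}$ is strictly positive — Jensen's inequality gives $\mathbb{E}[\log_2(1 + e^{X})] \ge \log_2(1 + e^{\mathbb{E}[X]})$, which is the \emph{lower} bound we want. Because $\mathbb{E}[X] = \ln c_{2k}^n + \mathbb{E}[\ln |\tilde{h}_{2k}^n|^2]$, the bound collapses to $\log_2\!\bigl(1 + c_{2k}^n\, e^{\mathbb{E}[\ln |\tilde{h}_{2k}^n|^2]}\bigr)$, matching the claimed form \eqref{eq:Lemma1_2} exactly provided $\mathbb{E}[\ln |\tilde{h}_{2k}^n|^2] = -E$.

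The remaining and most delicate task is to evaluate $\mathbb{E}[\ln |\tilde{h}_{2k}^n|^2]$ and thereby pin down the constant $E$. Using the normalization $\mathbb{E}[|\tilde{h}_{2k}^n|^2] = 1$, the squared fading envelope is unit-mean, and under the fading statistics for which $|\tilde{h}_{2k}^n|^2$ is exponentially distributed with unit mean, the standard integral $\int_0^\infty e^{-z}\ln z\,dz = -\gamma$ yields $\mathbb{E}[\ln |\tilde{h}_{2k}^n|^2] = -\gamma$, identifying $E$ as the Euler–Mascheroni constant. Substituting $e^{\mathbb{E}[\ln |\tilde{h}_{2k}^n|^2]} = e^{-E}$ and reinstating $b_{2k}^n B_{2k}$ reproduces \eqref{eq:Lemma1_2}. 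I expect the main obstacle to be this expected-log evaluation and the precise identification of $E$ with the fading model, rather than the Jensen step, which is mechanical; one must also verify the convexity carefully to guarantee the inequality points in the lower-bound direction claimed in the lemma.
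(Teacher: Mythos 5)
Your proposal is correct and takes essentially the same route as the paper: the substitution $\log_2(1+\Gamma)=\log_2\left(1+e^{\ln \Gamma}\right)$, Jensen's inequality for the convex map $t \mapsto \log_2\left(1+e^{t}\right)$, and the expected-log identity for an exponential random variable yielding the Euler--Mascheroni constant $E$ --- the paper evaluates $\mathbb{E}_Z[\ln z]$ on the full SNR $z$ with parameter $\lambda_Z$ via \cite[Eq.~4.331.1]{gradshteyn2014}, while you equivalently factor out the deterministic gain $c_{2k}^n$ first and compute $\mathbb{E}\left[\ln |\tilde{h}_{2k}^n|^2\right]=-E$ for the unit-mean fading term, a purely cosmetic reorganization. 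Your explicit hedge that the exponential law for $|\tilde{h}_{2k}^n|^2$ is an assumption (exact only in the Rayleigh special case of the Rician model \eqref{eq:6}) mirrors the paper's own assertion that $z$ is exponentially distributed, so there is no gap relative to the paper's argument.
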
}
\begin{proof}{We define a function $f(z)=\mathbb{E}_{Z}[\log_2(1+e^{\ln z})]$, with $z > 0$, then we have
	\begin{align}
		\label{eq:A5}
		f(z) \ge \log_2\big(1+ e^{\mathbb{E}_Z[\ln z]}\big)\ \triangleq \hat{f}(z).
	\end{align}
	where the inequality \eqref{eq:A5} holds  based on Jensen's inequality for a convex function $\log_2(1+e^{\ln z})$ w.r.t. $z$.}
	
{Let us denote $z \triangleq \frac{ p_{2k}^n {| {\tilde h}_{2k}^n |^2 \beta_0 }}{\left({H^2} + {{\left\| {\bq_n - \bw_n} \right\|}^2}\right)^{\alpha/2} \sigma^2}$. Because $z$ is an exponentially distributed
	random variable, its parameter is $\lambda_Z = (\mathbb{E}[Z])^{-1} = \frac{\zeta_{1k} }{p_{2k}^n \beta_0}$, with $\zeta_{1k} \triangleq \left({H^2} + {{\left\| {\bq_n - \bw_n} \right\|}^2}\right)^{\alpha/2} \sigma^2$. By applying \cite[Eq. 4.331.1]{gradshteyn2014}, $\mathbb{E}_Z[\ln z]$ can be calculated as
	\begin{align}
		\label{eq:A6}
		\mathbb{E}_Z[\ln z] &= \int_{0}^{+\infty} \lambda_Z e^{-z \lambda_Z } \ln{z} dz  = - \big(\ln (\lambda_Z) + E \big), \notag\\
		&= \ln \frac{ p_{2k}^n \beta_0  } {\zeta_{1k}} -E, 
	\end{align}
where $E$ is the Euler–Mascheroni constant, i.e., $E=0.5772156649$ as in \cite[Eq. 8.367.1]{gradshteyn2014}.}
	
By substituting \eqref{eq:A6} into  \eqref{eq:A5} and combining with \eqref{eq:10}, we obtain \eqref{eq:Lemma1_2}. Thus, the Lemma \ref{lemma:1} is proof.
\end{proof}


In this work, the bandwidth allocation on the AL is applied to efficiently utilize the resource. Besides, the spectrum allocation should satisfy: 
\begin{align}
	\label{eq:12}
	&\sum\limits_{k \in {\cal K}} b_{2k}^n \le 1, \forall n,\\
	\label{eq:13}
	&  0 \le b_{2k}^n \le 1,  \forall k,n.
\end{align}

{From \eqref{eq:Lemma1_2}, the achievable throughput of user $k$ (in bits) during flight time $T$ can be given as
\begin{align}
	\label{eq:Throughput}
	\Upsilon_{2k} = \delta_t P_{k,f} \Bigg[n_f \sum\limits_{n_{1k}}^N \bar{r}_{2k}^n + (1-n_f) \sum\limits_{n_{2k}+1}^N \bar{r}_{2k}^n \Bigg],
\end{align}
where the first component represents the case when the UAV pre-stores the file $f$ in its cache, i.e., $n_f=1$, and thus it can transmit directly to GU $k$;  While the second component means that the UAV does not cache the file $f$, i.e., $n_f=0$, it thus starts transmitting data to GU $k$ when it finishes receiving the requested file from the satellite; $n_{1k}$ denotes the first time slot during the flight time $T$; $n_{2k}$ signifies the time slot in which the UAV completely received GU $k$'s requested file from the satellite. }

\subsection{Problem Formulation}
{In this section, we aim to maximize the minimum achievable throughput among GUs through jointly optimizing the cache placement, resources allocation (i.e., bandwidth and transmit power of the UAV to convey the GU $k$'s data at time slot $n$), and the UAV trajectory,} with the assumption that the GUs' locations, the UAV altitude, and the satellite height are known a priori.

Let $\bq \triangleq \{\bq_n, \forall n\}$, ${\bold b} \triangleq \{b_{2k}^n,n \in {\cal N}, k \in {\cal K}\}$, ${\bold p}\triangleq \{p_{2k}^n,n \in {\cal N}, k \in {\cal K}\}$, $\boldsymbol{\eta} \triangleq\{\eta_f, \forall f \in {\cal F}\}$. Based on the above discussions, the optimization problem is formulated as follows
\begin{IEEEeqnarray}{rCl}\label{eq:P1}
	{\cal P}_1: &&{\max_{\bq, {\bf b}, {\bf p}, {\boldsymbol{\eta}} }~~ \min_{\forall k \in {\cal K}}  \Upsilon_{2k}} \IEEEyessubnumber \label{eq:P1:a}\\
	\mathtt{s.t.}~~
	\vspace{0.01cm}
	&& \delta_t \sum\limits_{n = n_{1k}}^{n_{2k}} B_{1k} \log_2 \Bigg(1+  \frac{ p_{1k}^n \beta_0 } { (d_{1k}^n)^\alpha \sigma^2} \Bigg) \notag \ge (1-\eta_f) Q, \notag \\ 
	&&   \forall n, k,
	\IEEEyessubnumber\label{eq:P1:b}\\
	&& \eta_f \in \{0,1\}, \forall f \in {\cal F},
	\IEEEyessubnumber\label{eq:P1:c} \\
	&& \sum\limits_{f \in {\cal F}} \eta_f \le S,  \forall f,
	\IEEEyessubnumber\label{eq:P1:d} \\
	\vspace{0.01cm}
	&& \sum\limits_{k \in {\cal K}} b_{2k}^n \le 1, \forall n, k, \IEEEyessubnumber\label{eq:P1:e} \\
	\vspace{0.03cm}
	&&  0 \le b_{2k}^n \le 1,  \forall n, \IEEEyessubnumber\label{eq:P1:f}\\
	\vspace{0.05cm} 
	&& \left\| {\bq_n-\bq_{n-1}} \right\| \le \delta_d, n=1,\dots,N,  \IEEEyessubnumber\label{eq:P1:g} \\ 
	&& \bq_0= \bq_I, \bq_N= \bq_F, \IEEEyessubnumber\label{eq:P1:h} \\ \vspace{0.05cm}
	&&0 \le \sum\limits_{k \in {\cal K}}  p_{2k}^n \le P_{u}^{\rm max}, \forall n, \IEEEyessubnumber\label{eq:P1:i}
	\vspace{0.2cm}
\end{IEEEeqnarray}

where constraint \eqref{eq:P1:b} guarantees that the non-cached file $f$ is transmitted from satellite to the UAV; \eqref{eq:P1:c} be the binary variable for the UAV to cache file $f$ or not; \eqref{eq:P1:d} means that the total number of contents cached at the UAV should be less than or equal to the storage capacity; \eqref{eq:P1:e} and \eqref{eq:P1:f} signify the bandwidth allocation constraints; \eqref{eq:P1:g} shows the maximum velocity constraint of the UAV; constraint \eqref{eq:P1:h} explains for the start and end points of the UAV; constraint \eqref{eq:P1:i} implies that the transmit power of the UAV is restrained by its maximum power budget. 

It is troublesome to obtain the direct solution of ${\cal P}_1$ since this is a mixed-integer non-linear program (MINLP), which is NP-hard. Specifically, the objective function is a max-min and non-convex function. Besides, the binary nature of constraint \eqref{eq:P1:c} and the non-convexity of \eqref{eq:P1:g}, which introduces intractability. In the next section, an efficient method is proposed to solve it.
\section{Proposed Solution for Solving ${\cal P}_1$}
\label{sec:3}
\subsection{Tractable formulation for ${\cal P}_1$}
It can be seen that the $n_{2k}$ value in problem ${\cal P}_1$ is not determined, thus we cannot solve ${\cal P}_1$ if $n_{2k}$ is a variable. Consequently, we intend to find a specified value for $n_{2k}$. In the case that file $f$ is stored at the UAV, the data transmission time from $s \to u$ equals zero, i.e., $t_{1k}=0$. Therefore, from \eqref{eq:8}, the total transmission time to transmit GU $k$'s data from $s \to u$ can be specified as
{\begin{align}
	\label{eq:16}
	t_{1k} \triangleq  \frac{(1-\eta_f)Q}{\bar{r}_{1k} }.
\end{align}}
where $\bar{r}_{1k} \triangleq \frac{\sum\limits_{\forall n \in {\cal N}} r_{1k}^n}{N}  $ is the average data transmission rate from $s \to u$ during flying time $T$.

Based on \eqref{eq:16}, we have
\begin{align}
	\label{eq:17}
	n_{2k} \triangleq n_{1k} + \ceil[\big]{\frac{t_{1k}}{\delta_t}},
\end{align}

{Without loss of generality, we assume that $n_{2k} \le N$.}

Let us introduce an auxiliary variable $\chi(\bq,{\bold b},{\bold p},\boldsymbol{\eta}) \triangleq \min\limits_{\forall k \in {\cal K}}  \Upsilon_{2k}$ to transform the max-min problem ${\cal P}_1$ into maximization problem, we then have
\begin{IEEEeqnarray}{rCl}\label{eq:P1.1}
	{\cal P}_{1.1}: &&\max_{\bq, {\bf b}, {\bf p}, {\boldsymbol{\eta}}, \chi }~~ \chi(\bq,{\bold b},{\bold p},\boldsymbol{\eta})  \IEEEyessubnumber \label{eq:P1.1:a}\\
	\mathtt{s.t.}~~
	&& {\Upsilon_{2k} \ge \chi, \forall k,}
	\IEEEyessubnumber\label{eq:P1.1:b}\\
	&& \eqref{eq:P1:b}-\eqref{eq:P1:i},
	\IEEEyessubnumber\label{eq:P1.1:c} 
\end{IEEEeqnarray}

To overcome the non-convexity of the problem ${\cal P}_1$, we first decompose ${\cal P}_1$ into three sub-problems, in which we first optimize the cache placement for a given trajectory and resource allocation, followed
by the optimization of resource allocation (i.e., bandwidth and transmit power allocation) for a given trajectory and cache placement, and finally, we perform the UAV trajectory optimization for a given cache placement and resource allocation. Based on the block coordinate descent (BCD) method \cite{HongBCD,PhuBackMEC} and SCA technique \cite{WangSCA}, an efficient iterative algorithm is proposed, while three sub-problems are alternately optimized until convergence.

\subsection{{Subproblem 1: Cache Placement Optimization}} For any given $\bold b^j$, $\bold p^j$, and $\bq^j$, the cache placement $\boldsymbol{\eta}$ can be obtain by solving the following optimization problem:
\begin{IEEEeqnarray}{rCl}\label{eq:P1.2}
	{\cal P}_1^{\boldsymbol{\eta}}: &&\max_{{\boldsymbol{\eta}}, \chi }~~ \chi(\bq^j,\bold b^j,\bold p^j,\boldsymbol{\eta}) \IEEEyessubnumber \label{eq:P1.2:a}\\
	\mathtt{s.t.}~~
	\vspace{0.05cm}
	&& \delta_t \sum\limits_{n = n_{1k}}^{n_{2k}}  \bar{r}_{1k} \ge (1-\eta_f) Q, \forall n, k, 	
	\IEEEyessubnumber\label{eq:P1.2:b0}\\
	&& {\Upsilon_{2k} \ge \chi, \forall k,}
	\IEEEyessubnumber\label{eq:P1.2:b}\\
	\vspace{0.05cm}
	&& \eta_f \in \{0,1\}, \forall f \in {\cal F},
	\IEEEyessubnumber\label{eq:P1.2:c} \\
	&& \sum\limits_{f \in {\cal F}} \eta_f \le S,  \forall f.
	\IEEEyessubnumber\label{eq:P1.2:d} 
\end{IEEEeqnarray}

It can be seen that the problem ${\cal P}_1^{\boldsymbol{\eta}}$ is non-convex due to the binary nature of the \eqref{eq:P1.2:c} constraint. {Thus, we first relax the binary cache placement variable into a continuous one.} As a result, ${\cal P}_1^{\boldsymbol{\eta}}$ can be re-written as
\begin{IEEEeqnarray}{rCl}\label{eq:P1.3}
	{\cal P}_{1.1}^{\boldsymbol{\eta}}: &&\max_{{\boldsymbol{\eta}}, \chi }~~ \chi(\bq^j,\bold b^j,\bold p^j,\boldsymbol{\eta}) \IEEEyessubnumber \label{eq:P1.3:a}\\
	\mathtt{s.t.}~~
	\vspace{0.05cm}
	&& 0 \le \eta_f \le 1, \forall f,
	\IEEEyessubnumber\label{eq:P1.3:b}\\
	\vspace{0.05cm}
	&& \eqref{eq:P1.2:b0},\eqref{eq:P1.2:b}, \eqref{eq:P1.2:d}.
	\IEEEyessubnumber\label{eq:P1.3:c} 
\end{IEEEeqnarray}

It is observed that the problem ${\cal P}_{1.1}^{\boldsymbol{\eta}}$ is convex since the objective function and constraints are convex, i.e., linear. Consequently, the problem ${\cal P}_{1.1}^{\boldsymbol{\eta}}$ can be solved by using standard methods \cite{Boyd}. Nevertheless, since the cache placement variable is relaxed into continuous values between 0 and 1, and thus it does not ensure that the $\eta_f$ can converge to 0 or 1. {Thus, we introduce a penalty function $\mathbb{P}(\eta_f) \triangleq \eta_f (\eta_f -1)$, which is a convex function \cite{Tran2020FDUAV,Boyd}.} 
Hence, the sub-problem $	{\cal P}_{1.1}^{\boldsymbol{\eta}}$ is reformulated as,
\begin{IEEEeqnarray}{rCl}
	\label{eq:P1.4}
	{\cal P}_{1.2}^{\boldsymbol{\eta}}: &&\max_{{\boldsymbol{\eta}}, \chi }~~ \big(\chi(\bq^j,\bold b^j,\bold p^j,\boldsymbol{\eta}) + \kappa \mathbb{P}(\eta_f) \big) \IEEEyessubnumber\label{eq:P14:a}\\
	\mathtt{s.t.}~~
	&&  \eqref{eq:P1.3:b},\eqref{eq:P1.3:c}.
	\IEEEyessubnumber\label{eq:P1.4:b}
\end{IEEEeqnarray}
where $\kappa >0$ is a penalty factor. Especially, the objective in ${\cal P}_{1.2}^{\boldsymbol{\eta}}$ is a difference of concave function, i.e., $f(\eta_f,\chi) \triangleq \big[\chi(\bq^j,\bold b^j,\bold p^j,\boldsymbol{\eta}) - \big(-\kappa \mathbb{P}(\eta_f)\big) \big]$ with convex constraints. As a result, the problem ${\cal P}_{1.2}^{\boldsymbol{\eta}}$ becomes a DC Programming Problem (DCP). To transform ${\cal P}_{1.2}^{\boldsymbol{\eta}}$ become a convex problem, we substitute $\mathbb{P}(\eta_f) $ in the objective by its first-order Taylor expansion at $(j+1)$-th iteration:
{\begin{align}
	\label{eq:23}
	\widehat{\mathbb{P}}(\eta_f) \triangleq \kappa \Big( \mathbb{P}(\eta_f^{(j)}) + \triangledown \mathbb{P}(\eta_f^{(j)}) \big( \eta_f - \eta_f^{(j)} \big)  \Big), 
\end{align}
where
\begin{align}
	\label{eq:24}
	\triangledown \mathbb{P}(\eta_f^{(j)}) = 2\eta_f^{(j)}-1.
\end{align}
Thus, we have
	\begin{align}
		\widehat{\mathbb{P}}(\eta_f) \triangleq \kappa \Big( \eta_f (2\eta_f^{(j)}-1) - (\eta_f^{(j)})^2 \Big), 
\end{align}}

{Consequently, sub-problem ${\cal P}_{1.2}^{\boldsymbol{\eta}}$ is approximately converted to the following linear programming:}
\begin{IEEEeqnarray}{rCl}
	\label{eq:P1.5}
	{\cal P}_{1.3}^{\boldsymbol{\eta}}: &&\max_{{\boldsymbol{\eta}},\chi }~~  \big(\chi(\bq^j,\bold b^j,\bold p^j,\boldsymbol{\eta}) + \widehat{\mathbb{P}}(\eta_f) \big) \IEEEyessubnumber\label{eq:P15:a}\\
	\mathtt{s.t.}~~ 
	&&  \eqref{eq:P1.3:b},\eqref{eq:P1.3:c}. 
	\IEEEyessubnumber\label{eq:P1.5:b}
\end{IEEEeqnarray}

{\begin{remark}
For cache placement optimization in Algorithm \ref{Alg1}, we only solve the relaxed problem \eqref{eq:P1.5} instead of the original problem \eqref{eq:P1.2},
where the binary cache placement variable $\boldsymbol{\eta}$ in the original problem \eqref{eq:P1.2} is relaxed to continuous values between 0 and 1. Notably, if the cache placement variable $\boldsymbol{\eta}$ obtained by Algorithm 1 can converge to binary values, then the relaxation is tight and the solution in \eqref{eq:P1.5} is also a feasible solution of the problem \eqref{eq:P1.2}. Especially, with an appropriate, sufficiently large, and positive value of the penalty parameter $\kappa$, the relaxed problem \eqref{eq:P1.5} is equivalent to the original problem \eqref{eq:P1.2}, as clearly mentioned in \cite[Theorem 2.1]{murray2010algorithm} or in the introduction of \cite{kalantari1982penalty}.
\end{remark}}

\vspace{0.3cm}

\begin{remark}
	The solutions of sub-problem ${\cal P}_{1.3}^{\boldsymbol{\eta}}$ can be obtained by applying standard optimization solver, i.e., CVX \cite{Boyd}. Besides, we observe that constraint \eqref{eq:P1.3:b} is equivalent to \eqref{eq:P1.2:c} when we achieve the optimal solutions of ${\cal P}_{1.3}^{\boldsymbol{\eta}}$. By applying the lower bound result in \eqref{eq:23}, it can be seen that the feasible solutions for ${\cal P}_{1.3}^{\boldsymbol{\eta}}$ are also satisfy ${\cal P}_{1}^{\boldsymbol{\eta}}$, which means that we can obtain at least a locally optimal solution.
\end{remark}

\subsection{{Subproblem 2: UAV Bandwidth and Transmit Power Optimization}}
For any given cache placement $\boldsymbol{\eta}^j$ and UAV trajectory $\bq^j$, $\bold b$ and $\bold p$ can be obtained by solving the following optimization problem:
\begin{IEEEeqnarray}{rCl}\label{eq:P1.6}
	{\cal P}_1^{\bold b,\bold p}: &&\max_{\bf b, \bf p,\chi }~~ \chi(\bq^j,\bold b,\bold p,\boldsymbol{\eta^j}) \IEEEyessubnumber \label{eq:P1.6:a}\\
	\mathtt{s.t.}~~
	&& \zeta_{\bf b, \bf p }^n \ge  \chi, \forall n, k,
	\IEEEyessubnumber\label{eq:P1.6:b} \\
	\vspace{0.05cm}
		&& \delta_t \sum\limits_{n = n_{1k}}^{n_{2k}}  \bar{r}_{1k} \ge (1-\eta_f) Q, \forall n, k, 	
	\IEEEyessubnumber\label{eq:P1.6:b0}\\
	&& \sum\limits_{f \in {\cal F}} \eta_f  \le S, \forall f,
	\IEEEyessubnumber\label{eq:P1.6:c} \\
	\vspace{0.05cm}
	&& \sum\limits_{k \in {\cal K}} b_{2k}^n \le 1, \forall n, k, \IEEEyessubnumber\label{eq:P1.6:d} \\
	\vspace{0.02cm}
	&&  0 \le b_{2k}^n \le 1,  \forall n,k, \IEEEyessubnumber\label{eq:P1.6:e}\\
	\vspace{0.01cm} 
	&&0 \le \sum\limits_{k \in {\cal K}}  p_{2k}^n \le P_{u}^{\rm max}, \forall n, \IEEEyessubnumber\label{eq:P1.6:f}
	\vspace{0.2cm}
\end{IEEEeqnarray}
where 
{\begin{align}
	\zeta_{\bf b, \bf p }^n &\triangleq \delta_t P_{k,f} b_{2k}^n B_{2k} \Bigg[n_f \sum\limits_{n_{1k}}^N \Phi_{\bf p }^n + (1-n_f) \sum\limits_{n_{2k}+1}^N \Phi_{\bf p }^n \Bigg],  \\
	\Phi_{\bf p }^n &\triangleq \log_2 \Bigg(1+  \frac{e^{-E} p_{2k}^n \beta_0} {\left({H_u^2} + {{\left\| {\bq_n - \bw_k} \right\|}^2}\right)^{\alpha/2} \sigma^2} \Bigg).
\end{align}}

Due to the non-convexity of constraints \eqref{eq:P1.6:b}, sub-problem ${\cal P}_1^{\bold b,\bold p}$ is also non-convex. To make ${\cal P}_1^{\bold b,\bold p}$ more tractable, we first introduce slack variable $\bar{\Phi}_{\bf p }^n$ such that
\begin{align}
	\label{eq:27}
	\Phi_{\bf p }^n \triangleq \log_2 \Bigg(1+  \frac{e^{-E} p_{2k}^n \beta_0} {\left({H_u^2} + {{\left\| {\bq_n - \bw_k} \right\|}^2}\right)^{\alpha/2} \sigma^2} \Bigg) \ge \bar{\Phi}_{\bf p }^n.
\end{align}

Consequently, $\zeta_{\bf b, \bf p }^n$ is converted to the following constraint 
{\begin{align}
	\label{eq:28}
	\zeta_{\bf b, \bf p }^n &\ge \delta_t P_{k,f} B_{2k} \notag\\ & \times \Bigg[n_f \sum\limits_{n_{1k}}^N  b_{2k}^n \bar{\Phi}_{\bf p }^n + (1-n_f) \sum\limits_{n_{2k}+1}^N  b_{2k}^n \bar{\Phi}_{\bf p }^n \Bigg] \triangleq \bar{\zeta}_{\bf b, \bf p }^n .
\end{align}}

\begin{lemma}
	\label{lemma:2}
	For any given $b_{2k}^{n,j}$ and $\bar{\Phi}_{\bf p }^{n,j}$ at $(j+1)$-th iteration, the lower bound of $\bar{\zeta}_{\bf b, \bf p }^n$ is expressed as
	{\begin{align}
		\label{eq:29}
		\bar{\zeta}_{\bf b, \bf p}^n &\ge \delta_t P_{k,f} B_{2k} \notag\\ & \times \Bigg[n_f \sum\limits_{n_{1k}}^N  \Theta_{\bf b, \bf p}^n + (1-n_f) \sum\limits_{n_{2k}+1}^N  \Theta_{\bf b, \bf p}^n \Bigg]  \triangleq \widehat{\zeta}_{\bf b, \bf p}^n,
	\end{align}}
	where 
	\begin{align}
		\label{eq:30}
		\Theta_{\bf b, \bf p}^n &\triangleq \frac{\big(b_{2k}^{n,j}+\bar{\Phi}_{\bf p}^{n,j}\big)^2}{4} + \frac{\big(b_{2k}^{n,j}+\bar{\Phi}_{\bf p}^{n,j}\big)}{2} \times \big(b_{2k}^n-b_{2k}^{n,j} \notag\\ &+\bar{\Phi}_{\bf p}^n-\bar{\Phi}_{\bf p}^{n,j}\big)-\frac{\big(b_{2k}^n-\bar{\Phi}_{\bf p}^n\big)^2}{4}.
	\end{align}
	
	\begin{proof}
		It is observed that $b_{2k}^n  \widehat{\Phi}_{2k}^n$ is a non-convex function. To make it tractable, we replace $b_{2k}^n  \bar{\Phi}_{2k}^n$ by an equivalent different of convex function $0.25\Big[\big(b_{2k}^n+\bar{\Phi}_{2k}^n\big)^2-\big(b_{2k}^n-\bar{\Phi}_{2k}^n\big)^2\Big]$. Then, we adopt first-order Taylor expansion to approximate function $0.25\big(b_{2k}^n+\bar{\Phi}_{2k}^n\big)^2$ as follow
		\begin{align}
			\label{eq:31}
			0.25\big(b_{2k}^n+\bar{\Phi}_{2k}^n\big)^2 &\ge \frac{\big(b_{2k}^{n,j}+\bar{\Phi}_{2k}^{n,j}\big)^2}{4} + \frac{\big(b_{2k}^{n,j}+\bar{\Phi}_{2k}^{n,j}\big)}{2} \times \notag \\
			& \big(b_{2k}^n-b_{2k}^{n,j}+\widehat{\Phi}_{2k}^n-\bar{\Phi}_{2k}^{n,j}\big).
		\end{align}
		Consequently, by substituting the lower bound of $0.25\big(b_{2k}^n+\bar{\Phi}_{2k}^n\big)^2$ as in \eqref{eq:31} into $0.25\Big[\big(b_{2k}^n+\bar{\Phi}_{2k}^n\big)^2-\big(b_{2k}^n-\bar{\Phi}_{2k}^n\big)^2\Big]$, the lower bound of $b_{2k}^n \bar{\Phi}_{2k}^n$ can be obtained as \eqref{eq:30}, which finishes the proof.
	\end{proof}
	
\end{lemma}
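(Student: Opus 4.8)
The plan is to isolate the single source of non-convexity in $\bar{\zeta}_{\bf b, \bf p}^n$, namely the bilinear products $b_{2k}^n \bar{\Phi}_{\bf p}^n$ appearing inside the two summations of \eqref{eq:28}, and to build a concave minorant for each such product by a difference-of-convex (DC) decomposition followed by one step of successive convex approximation. Since the prefactor $\delta_t P_{k,f} B_{2k}$ together with the indicators $n_f$ and $(1-n_f)$ are nonnegative constants, it suffices to lower-bound each term $b_{2k}^n \bar{\Phi}_{\bf p}^n$ in isolation; the bound for the whole expression then follows by summing the individual minorants over the appropriate ranges of $n$.

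First I would rewrite each product exactly as a difference of two convex quadratics,
\begin{align}
\label{eq:dc_identity}
b_{2k}^n \bar{\Phi}_{\bf p}^n = \tfrac{1}{4}\big(b_{2k}^n+\bar{\Phi}_{\bf p}^n\big)^2 - \tfrac{1}{4}\big(b_{2k}^n-\bar{\Phi}_{\bf p}^n\big)^2.
\end{align}
The second term $-\tfrac14\big(b_{2k}^n-\bar{\Phi}_{\bf p}^n\big)^2$ is already concave and can be retained untouched. For the first, convex term I would invoke the fact that a differentiable convex function lies above its tangent plane: evaluating the gradient of $\tfrac14\big(b_{2k}^n+\bar{\Phi}_{\bf p}^n\big)^2$ at the current iterate $\big(b_{2k}^{n,j},\bar{\Phi}_{\bf p}^{n,j}\big)$ yields the first-order lower bound \eqref{eq:31}. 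Substituting this minorant for the convex part while leaving the concave part intact produces a global lower bound on the product that is jointly concave in $\big(b_{2k}^n,\bar{\Phi}_{\bf p}^n\big)$, and collecting terms reproduces precisely $\Theta_{\bf b, \bf p}^n$ as in \eqref{eq:30}.

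Finally I would reinsert $\Theta_{\bf b, \bf p}^n$ in place of each $b_{2k}^n\bar{\Phi}_{\bf p}^n$ in \eqref{eq:28}. Because every product has been replaced by a quantity no larger than itself and all multiplying constants are nonnegative, the aggregate is lower-bounded by $\widehat{\zeta}_{\bf b, \bf p}^n$, which establishes \eqref{eq:29}. The resulting surrogate is concave in the optimization variables, so replacing $\bar{\zeta}_{\bf b, \bf p}^n$ by the smaller $\widehat{\zeta}_{\bf b, \bf p}^n$ inside constraint \eqref{eq:P1.6:b} merely restricts the feasible set while rendering the constraint convex, which is exactly what the SCA iteration requires.

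The main obstacle I anticipate is the coupling between the bandwidth variable $b_{2k}^n$ and the rate slack $\bar{\Phi}_{\bf p}^n$ (which in turn encodes the transmit power $p_{2k}^n$ through \eqref{eq:27}): neither factor can be held fixed, so a na\"ive bilinear linearization would not furnish a guaranteed bound because the product is indefinite. One must therefore verify both that the DC split in \eqref{eq:dc_identity} yields the correct direction in the tangent-plane inequality \eqref{eq:31} and that the retained concave term does not spoil joint concavity of the minorant. Keeping careful track of these two points is the only delicate step; the remaining algebra leading to \eqref{eq:30} is routine.
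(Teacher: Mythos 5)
Your proposal is correct and takes essentially the same route as the paper's own proof: the identical difference-of-convex identity $b_{2k}^n\bar{\Phi}_{\bf p}^n=\tfrac{1}{4}\big[\big(b_{2k}^n+\bar{\Phi}_{\bf p}^n\big)^2-\big(b_{2k}^n-\bar{\Phi}_{\bf p}^n\big)^2\big]$, the same first-order Taylor minorization of the convex square at $\big(b_{2k}^{n,j},\bar{\Phi}_{\bf p}^{n,j}\big)$ while retaining the concave term, and the same aggregation through the nonnegative prefactor $\delta_t P_{k,f} B_{2k}$ and indicator weights to arrive at $\widehat{\zeta}_{\bf b,\bf p}^n$. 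Your additional remarks on why term-by-term bounding suffices and on the joint concavity of the resulting surrogate merely make explicit what the paper leaves implicit.
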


Bearing all the above developments in mind, the problem ${\cal P}_1^{\bold b,\bold p}$ is re-formulated as
\begin{IEEEeqnarray}{rCl}\label{eq:P1.7}
	{\cal P}_{1.1}^{\bold b,\bold p}: &&\max_{{\bf b}, {\bf p},\chi }~~ \chi(\bq^j,\bold b,\bold p,\boldsymbol{\eta^j}) \IEEEyessubnumber \label{eq:P1.7:a}\\
	\mathtt{s.t.}~~
	&&  \widehat{\zeta}_{\bf b, \bf p}^n  \ge  \chi, \forall n, k,
	\IEEEyessubnumber\label{eq:P1.7:b} \\
	\vspace{0.05cm}
	&&{\log_2 \Bigg(1+  \frac{ e^{-E} p_{2k}^n \beta_0} {\left({H_u^2} + {{\left\| {\bq_n - \bw_k} \right\|}^2}\right)^{\alpha/2} \sigma^2} \Bigg)  \ge \bar{\Phi}_{\bf p}^n,} \notag\\ && {\forall n,k,}
	\IEEEyessubnumber\label{eq:P1.7:f}\\
	\vspace{0.01cm}
	&& \eqref{eq:P1.6:b0}-\eqref{eq:P1.6:f}.
	\IEEEyessubnumber\label{eq:P1.7:e} 
\end{IEEEeqnarray}

\begin{remark}
	It is noted that the problem ${\cal P}_{1.1}^{\bold b,\bold p}$ is a convex problem because the objective function and all constraints are convex. Therefore, it can be numerically solved via standard optimization solvers such as CVX \cite{Boyd}. By applying the lower bound as in Lemma \ref{lemma:2}, the set of feasible solutions for ${\cal P}_{1.1}^{\bold b,\bold p}$ is also a subset feasible solutions for ${\cal P}_1^{\bold b,\bold p}$. Thus, it guarantees that at least a locally optimal solution can be achieved.
\end{remark}

\subsection{{Subproblem 3: UAV Trajectory Optimization}}
For any given bandwidth allocation $\bold{b}^j$, UAV transmit power $\bold{p}^j$, and cache placement $\boldsymbol{\eta}^j$, the UAV trajectory $\bq$ can be obtained by solving following optimization problem
\begin{IEEEeqnarray}{rCl}\label{eq:P1.8}
	{\cal P}_1^{\bq}: &&\max_{\bq,\chi }~~ \chi(\bq,\bold b^j,\bold p^j,\boldsymbol{\eta}^j) \IEEEyessubnumber \label{eq:P1.8:a}\\
	\mathtt{s.t.}~~
	\vspace{0.05cm}
	&& \zeta_{\bq}^n \ge \chi, \forall n,k,
	\IEEEyessubnumber\label{eq:P1.8:b} \\
	&& \delta_t \sum\limits_{n = n_{1k}}^{n_{2k}}  \bar{r}_{1k} \ge (1-\eta_f) Q, \forall n, k, 	
	\IEEEyessubnumber\label{eq:P1.8:b0}\\
	&& \sum\limits_{f \in {\cal F}} \eta_f \le S, \forall f, 
	\IEEEyessubnumber\label{eq:P1.8:c} \\
	&& \left\| {\bq_n-\bq_{n-1}} \right\| \le \delta_d= V_{max}\delta_t, n=1,\dots,N, \notag \\ \IEEEyessubnumber\label{eq:P1.8:d} \\ 
	&& \bq_0= \bq_I, \bq_N= \bq_F, \IEEEyessubnumber\label{eq:P1.8:e} 
\end{IEEEeqnarray}
where
{\begin{align}
	\zeta_{\bq}^n &\triangleq \delta_t P_{k,f} b_{2k}^n B_{2k} \Bigg[n_f \sum\limits_{n_{1k}}^N \Phi_{\bq}^n+ (1-n_f) \sum\limits_{n_{2k}+1}^N \Phi_{\bq}^n \Bigg], \\
	\Phi_{\bq}^n &\triangleq \log_2 \Bigg(1+  \frac{  \psi} {\left({H_u^2} + {{\left\| {\bq_n - \bw_k} \right\|}^2}\right)^{\alpha/2}} \Bigg), \\
	\psi &\triangleq \frac{e^{-E} p_{2k}^n \beta_0}{\sigma^2}.
\end{align} }

The problem ${\cal P}_1^{\bq}$ is a non-convex optimization problem due
to the non-convexity of constraints \eqref{eq:P1.8:b} and the combinatorial of designing UAV trajectory.
To convexify \eqref{eq:P1.8:b}, we introduce slack variable such that $\left({H_u^2} + {{\left\| {\bq_n - \bw_k} \right\|}^2}\right) \le (\nu_{2k}^n)^{2/\alpha}$. Let us denote ${\boldsymbol{\nu}} \triangleq \{\nu_{2k}^n, \forall n,k\}$, then the sub-problem ${\cal P}_1^{\bq}$ can be re-written as 
\begin{IEEEeqnarray}{rCl}\label{eq:P1.9}
	{\cal P}_{1.1}^{\bq}: &&\max_{\bq,{\boldsymbol{\nu},\chi} }~~ \chi(\bq,\bold b^j,\bold p^j,\boldsymbol{\eta}^j)  \IEEEyessubnumber \label{eq:P1.9:a}\\
	\mathtt{s.t.}~~
	&& \left({H_u^2} + {{\left\| {\bq_n - \bw_k} \right\|}^2}\right) \le (\nu_{2k}^n)^{2/\alpha}, \forall n,k,
	\IEEEyessubnumber\label{eq:P1.9:b} \\
	&&  \bar{\zeta}_{\bq}^n  \ge  \chi,  \forall n, k,
	\IEEEyessubnumber\label{eq:P1.7:c} \\
	\vspace{0.05cm}
	&& \eqref{eq:P1.8:b0}-\eqref{eq:P1.8:e},
	\IEEEyessubnumber\label{eq:P1.9:e} 
\end{IEEEeqnarray}
where 
{\begin{align}
	\bar{\zeta}_{\bq}^n &\triangleq \delta_t P_{k,f} b_{2k}^n B_{2k} \Bigg[n_f \sum\limits_{n_{1k}}^N \bar{\Phi}_{\bq}^n + (1-n_f) \sum\limits_{n_{2k}+1}^N \bar{\Phi}_{\bq}^n \Bigg], \\
	\bar{\Phi}_{\bq}^n &\triangleq \log_2 \Bigg(1+  \frac{ \psi} {\nu_{2k}^n } \Bigg).
\end{align}}

It is noted that the problem ${\cal P}_{1.1}^{\bq}$ is still non-convex. To solve this problem, we transform ${\cal P}_{1.1}^{\bq}$ into a convex form by giving following lemmas:
\begin{lemma}
	\label{lemma:3}
	For any given $\nu_{2k}^{n,j} $ at $(j+1)$-th iteration, $\bar{\Phi}_{\bq}^n \triangleq \log_2 \Bigg(1+  \frac{ \psi} {\nu_{2k}^n } \Bigg)$ is lower bounded by
	\begin{align}
		\label{eq:35}
		\bar{\Phi}_{\bq}^n \ge \log_2\big(1+\frac{\psi}{\nu_{2k}^{n,j}}\big) - \frac{\psi(\nu_{2k}^n-\nu_{2k}^{n,j}) } {\nu_{2k}^{n,j} (\nu_{2k}^{n,j} +\psi)\ln 2} \triangleq \hat{\Phi}_{\bq}^n.
	\end{align}
	
	\begin{proof}
		Due to the convexity of the function $f(x)=\log_2(1+1/x)$ with $x>0$. By applying the first-order Taylor expansion to achieve the lower bound of $f(x)$ at given feasible point $x^j$ as
		\begin{align}
			\label{eq:36}
			\log_2 \big(1+\frac{\Omega_1}{x}\big) \ge \log_2\big(1+\frac{\Omega_1}{x^j}\big) - \frac{\Omega_1 (x-x^j) } {x^j (x^j +\Omega_1)\ln 2}.
		\end{align}
		
		By adopting $\Omega_1 = \psi$ and $x=\nu_{2k}^n$, the Lemma \ref{lemma:3} is then proved.
	\end{proof}
	
\end{lemma}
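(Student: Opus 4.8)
The plan is to recognize $\bar{\Phi}_{\bq}^n$ as a function that is convex in the single scalar slack variable $\nu_{2k}^n$, and then to exploit the elementary fact that a differentiable convex function never lies below any of its first-order Taylor (tangent) approximations. Linearizing at the previous iterate $\nu_{2k}^{n,j}$ thus produces a global lower bound, which is exactly the SCA surrogate \eqref{eq:35} required to convexify constraint \eqref{eq:P1.7:c} in ${\cal P}_{1.1}^{\bq}$.

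First I would abstract away the iteration notation by introducing the auxiliary function $f(x) \triangleq \log_2\big(1 + \psi/x\big)$ for $x>0$, so that $\bar{\Phi}_{\bq}^n = f(\nu_{2k}^n)$ with $\psi = e^{-E} p_{2k}^n \beta_0/\sigma^2 > 0$ fixed at this stage. Writing $f(x) = (\ln 2)^{-1}\big[\ln(x+\psi) - \ln x\big]$, a direct differentiation gives $f'(x) = -\psi/[\ln 2 \cdot x(x+\psi)]$ and $f''(x) = \psi(2x+\psi)/[\ln 2 \cdot (x(x+\psi))^2]$. Since $x>0$ and $\psi>0$, we have $f''(x)>0$, which establishes strict convexity of $f$ on $(0,\infty)$. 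I would also note that the slack constraint \eqref{eq:P1.9:b} forces $(\nu_{2k}^n)^{2/\alpha} \ge H_u^2 > 0$, hence $\nu_{2k}^n>0$, so the optimization always stays in the region where $f$ and $f'$ are well defined.

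Next I would invoke the first-order characterization of convexity: for any feasible linearization point $x^j>0$, the inequality $f(x) \ge f(x^j) + f'(x^j)(x-x^j)$ holds globally. Substituting $f(x^j) = \log_2(1+\psi/x^j)$ and $f'(x^j) = -\psi/[\ln 2 \cdot x^j(x^j+\psi)]$ reproduces the generic bound \eqref{eq:36} with $\Omega_1 = \psi$, and then setting $x = \nu_{2k}^n$ and $x^j = \nu_{2k}^{n,j}$ yields precisely \eqref{eq:35}, which defines $\hat{\Phi}_{\bq}^n$ and completes the argument.

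The derivation is largely mechanical, so the only step deserving care — and the part the statement's proof sketch merely asserts — is the rigorous verification of convexity, namely confirming $f''(x)>0$ rather than reading convexity off by inspection, together with checking that the feasible set keeps $\nu_{2k}^n$ strictly positive so the tangent bound is valid. Everything past that is routine differentiation and substitution.
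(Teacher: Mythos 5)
Your proposal is correct and follows essentially the same route as the paper's proof: observing that $f(x)=\log_2(1+\psi/x)$ is convex for $x>0$ and applying the first-order Taylor (tangent) bound at $\nu_{2k}^{n,j}$ to obtain \eqref{eq:35}. Your explicit verification that $f''(x)=\psi(2x+\psi)/\bigl[\ln 2 \cdot \bigl(x(x+\psi)\bigr)^2\bigr]>0$, and the check via \eqref{eq:P1.9:b} that $\nu_{2k}^n$ stays strictly positive, merely fill in details the paper asserts by inspection.
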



Consequently, we have
{\begin{align}
	\label{eq:38}
	\bar{\zeta}_{\bq}^n \ge \delta_t P_{k,f} B_{2k} \Bigg[n_f \sum\limits_{n_{1k}}^N \hat{\Phi}_{\bq}^n + (1-n_f) \sum\limits_{n_{2k}+1}^N \hat{\Phi}_{\bq}^n \Bigg]  \triangleq \hat{\zeta}_{\bq}^n.
\end{align}}


Bearing all the above discussions in mind, the problem ${\cal P}_{1.1}^{\bq}$ is re-formulated as
\begin{IEEEeqnarray}{rCl}\label{eq:P1.10}
	{\cal P}_{1.2}^{\bq}: &&\max_{\bq,{\boldsymbol{\nu},\chi} }~~ \chi(\bq,\bold b^j,\bold p^j,\boldsymbol{\eta}^j) \IEEEyessubnumber \label{eq:P1.10:a}\\
	\mathtt{s.t.}~~
	&& \eqref{eq:P1.8:b0}-\eqref{eq:P1.8:e}, \eqref{eq:P1.9:b}, 
	\IEEEyessubnumber\label{eq:P1.10:b} \\
	&&  \hat{\zeta}_{\bq}^n \ge  \chi,  \forall n, k,
	\IEEEyessubnumber\label{eq:P1.10:c}  
\end{IEEEeqnarray}

Because the objective function and all constraints of ${\cal P}_{1.2}^{\bq}$ are convex, thus problem ${\cal P}_{1.2}^{\bq}$ can be directly solved by using standard methods \cite{Boyd}. Consequently, we propose an alternating algorithm based on three sub-problem solutions described in Algorithm \ref{Alg1}.
\begin{algorithm}[h]
	\begin{algorithmic}[1]
		\label{Alg1}
		\protect\caption{Proposed Iterative Algorithm to Solve ${\cal P}_1$}
		\label{alg_1}
		\global\long\def\algorithmicrequire{\textbf{Initialization:}}
		\REQUIRE  Set $j:=0$ and initialize ${\bold{b}}^j$, ${\bold{p}}^j$, and ${\bq}^j$.
		\vspace{0.05cm}
		\REPEAT
		\vspace{0.05cm}
		\STATE Solve ${\cal P}_1^{\boldsymbol{\eta}}$ for given $\{ {\bold{b}}^j,{\bold{p}}^j, {\bq}^j\}$ and denote the optimal solution as $\boldsymbol{\eta}^{j+1}$.
		\vspace{0.05cm}
		\STATE Solve ${\cal P}_1^{\bold b, \bold p}$ for given $\{{\boldsymbol{\eta}}^{j+1}, {\bq}^j\}$ and denote the optimal solution as $\bold b^{j+1}$ and $\bold p^{j+1}$.
		\vspace{0.05cm}
		\STATE Solve ${\cal P}_1^{\bq}$ for given $ \{{\boldsymbol{\eta}}^{j+1}, \bold b^{j+1}, \bold p^{j+1}\}  $ and denote the optimal solution as $\boldsymbol{\bq}^{j+1}$.
		\vspace{0.05cm}
		\STATE Set $j:=j+1.$
		\vspace{0.05cm}
		\UNTIL Convergence \\
\end{algorithmic} \end{algorithm}

\subsection{Convergence and Complexity Analysis}
\subsubsection{Convergence Analysis}
\begin{proposition}
	\label{proposition_1}
	From the proposed iterative Algorithm \ref{Alg1}, we can obtain at least a locally optimal solution.
\end{proposition}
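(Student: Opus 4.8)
The plan is to prove Proposition~\ref{proposition_1} by the standard two-part argument for block-coordinate-descent schemes built on successive convex approximation (SCA): first show that the objective value $\chi$ is monotonically non-decreasing over the outer iterations and is bounded above, so that the sequence of objective values converges; then argue that every limit point of the iterates satisfies the Karush--Kuhn--Tucker (KKT) conditions of $\mathcal{P}_{1.1}$, i.e.\ it is at least a locally optimal solution. The property that makes both parts work is that each convex surrogate used in the three sub-problems is a \emph{tight} lower bound of the true objective or constraint: the first-order Taylor expansions in \eqref{eq:23}, in Lemma~\ref{lemma:2}, and in Lemma~\ref{lemma:3} agree with their original functions in both value and gradient at the current expansion point.

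For monotonicity I would track the objective through the three block updates inside one outer iteration. Write $\chi^{(j)} \triangleq \chi(\bq^{j},\mathbf{b}^{j},\mathbf{p}^{j},\boldsymbol{\eta}^{j})$. In Step~2, the previous point $\boldsymbol{\eta}^{j}$ attains the surrogate objective of $\mathcal{P}_{1.3}^{\boldsymbol{\eta}}$ with equality, because the Taylor expansion \eqref{eq:23} is taken at $\boldsymbol{\eta}^{j}$; hence solving $\mathcal{P}_{1.3}^{\boldsymbol{\eta}}$ to optimality gives $\chi(\bq^{j},\mathbf{b}^{j},\mathbf{p}^{j},\boldsymbol{\eta}^{j+1}) \ge \chi^{(j)}$. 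Applying the same tightness argument to Lemma~\ref{lemma:2} in Step~3 and to Lemma~\ref{lemma:3} in Step~4, no block update can decrease $\chi$, which yields the chain
\begin{align}
\chi^{(j+1)} &\ge \chi(\bq^{j},\mathbf{b}^{j+1},\mathbf{p}^{j+1},\boldsymbol{\eta}^{j+1}) \notag\\
&\ge \chi(\bq^{j},\mathbf{b}^{j},\mathbf{p}^{j},\boldsymbol{\eta}^{j+1}) \ge \chi^{(j)}.
\end{align}
Thus $\{\chi^{(j)}\}$ is non-decreasing. Boundedness follows from the physical constraints: the transmit-power budget \eqref{eq:P1:i}, the bandwidth constraints \eqref{eq:P1:e}--\eqref{eq:P1:f}, and the finite horizon $T$ force each $\Upsilon_{2k}$, and hence $\chi=\min_{k}\Upsilon_{2k}$, to be upper bounded. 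A bounded monotone sequence converges, so $\chi^{(j)}$ converges.

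The hard part will be upgrading convergence of the \emph{value} to local optimality of the \emph{iterates}. Here I would invoke the gradient-consistency of the SCA surrogates: since each surrogate matches its original function in value and first derivative at the expansion point, a fixed point of the alternating procedure satisfies the KKT stationarity conditions of each block of $\mathcal{P}_{1.1}$ simultaneously, and these coincide with the KKT conditions of the full problem (the standard SCA argument, cf.\ \cite{WangSCA}). Two technical points need care. First, the binary relaxation in Sub-problem~1 must be reconciled with the original $\mathcal{P}_{1}^{\boldsymbol{\eta}}$; I would rely on the Remark following \eqref{eq:P1.5} and \cite[Theorem~2.1]{murray2010algorithm}, so that for a sufficiently large penalty $\kappa$ the relaxed cache variables converge to $\{0,1\}$ and the recovered $\boldsymbol{\eta}$ is both feasible and stationary for $\mathcal{P}_{1}^{\boldsymbol{\eta}}$. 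Second, one should verify that the slack reformulations (the variables $\bar{\Phi}_{\mathbf{p}}^{n}$ in \eqref{eq:27} and $\nu_{2k}^{n}$ in \eqref{eq:P1.9:b}) are active at optimality, so that no gap is introduced between the reformulated and original constraints. Assembling these, the limit point is feasible for $\mathcal{P}_{1}$ and satisfies its KKT conditions, which establishes that Algorithm~\ref{Alg1} returns at least a locally optimal solution.
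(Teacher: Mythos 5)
Your proposal is correct, and its first half coincides with the paper's entire proof: the paper's argument (inequalities $(i)$ through $(i7)$ in \eqref{eq:54}--\eqref{eq:57}) is exactly your monotonicity chain --- each block update cannot decrease the objective because the Taylor surrogates of \eqref{eq:23}, Lemma~\ref{lemma:2}, and Lemma~\ref{lemma:3} are tight at the current iterate and each sub-problem is solved to optimality --- followed by boundedness from the power budget, bandwidth limits, and finite flight time, hence convergence of the monotone objective sequence. Where you genuinely diverge is that the paper stops there: it proves only that the \emph{value} $\Xi(\boldsymbol{\eta}^j,\mathbf{b}^j,\mathbf{p}^j,\bq^j)$ converges, and the claim of local optimality of the \emph{iterates} is left implicit. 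Your second half --- invoking gradient consistency of the SCA surrogates so that a fixed point of the alternation satisfies the KKT conditions of each block problem and hence of $\mathcal{P}_{1.1}$, plus the two technical repairs (tightness of the penalty-based binary relaxation via \cite[Theorem 2.1]{murray2010algorithm}, and activity of the slack constraints \eqref{eq:27} and \eqref{eq:P1.9:b} at optimality) --- is precisely the material needed to justify the proposition as actually worded, and the paper omits it entirely. So your proposal is, if anything, more complete than the paper's own proof; the trade-off is that this extra half remains a sketch resting on the cited SCA literature (e.g., \cite{WangSCA}) rather than a self-contained verification, whereas the paper's shorter argument is fully explicit but establishes the weaker statement that the objective sequence converges.
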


\begin{proof}
	Let us define $\Xi({\boldsymbol{\eta}}^j, \bold b^j, \bold p^j, {\bq}^j)$, $\Xi^{\bold{b,p}}_{\rm lb}({\boldsymbol{\eta}}^j, \bold b^j, \bold p^j, {\bq}^j)$, and $\Xi^{\bq}_{\rm lb}({\boldsymbol{\eta}}^j, \bold b^j, \bold p^j, {\bq}^j)$ as the objective values of ${\cal P}_1$, ${\cal P}_{1.1}^{\bold b, \bold p}$, and ${\cal P}_{1.2}^{\bq}$ at the $j$-th iteration. In the $(j+1)$-th iteration, in step 2 of Algorithm \ref{Alg1}, we have
	\begin{align}
		\label{eq:54}
		\Xi({\boldsymbol{\eta}}^j, \bold b^j, \bold p^j, {\bq}^j) \ile \Xi({\boldsymbol{\eta}}^{j+1}, \bold b^j, \bold p^j, {\bq}^j).
	\end{align}
	The inequality $(i)$ holds because $\boldsymbol{\eta}^{j+1}$ is a optimal solution of ${\cal P}_1^{\boldsymbol{\eta}}$. After that, in step 3 of Algorithm \ref{Alg1}, we have
	\begin{align}
		\label{eq:55}
		\Xi({\boldsymbol{\eta}}^{j+1}, \bold b^j, \bold p^j, {\bq}^j) &\iieq \Xi^{\bold b, \bold p}({\boldsymbol{\eta}}^{j+1}, \bold b^j, \bold p^j, {\bq}^j) \notag\\ &\iiile  \Xi^{\bold b, \bold p}({\boldsymbol{\eta}}^{j+1}, \bold b^{j+1}, \bold p^{j+1}, {\bq}^j) \notag\\ &\iiiile  \Xi({\boldsymbol{\eta}}^{j+1}, \bold b^{j+1}, \bold p^{j+1}, {\bq}^j).	
	\end{align}
	The inequality $(i2)$ holds because first-order Taylor approximation at points $\bold b^j$ and $\bold p^j$ is tight as in constraint \eqref{eq:27}, \eqref{eq:28}, and \eqref{eq:29} \cite{WangSCA}. Moreover, the inequality $(i3)$ holds since $\bold b^{j+1}$ and $\bold p^{j+1}$ are optimal solutions of ${\cal P}_{1.1}^{\bold b, \bold p}$. Then, the inequality $(i4)$ holds because the objective value of ${\cal P}_{1.1}^{\bold b, \bold p}$ is a lower bound to that of ${\cal P}_{1}^{\bold b, \bold p}$ at given points $\bold b^{j+1}$ and $\bold p^{j+1}$. Further, in step 4, we have
	\begin{align}
		\label{eq:57}
		\Xi({\boldsymbol{\eta}}^{j+1}, \bold b^{j+1}, \bold p^{j+1}, {\bq}^j) &\ifiveeq \Xi^{\bq}_{\rm lb}({\boldsymbol{\eta}}^{j+1}, \bold b^{j+1}, \bold p^{j+1}, {\bq}^j) \notag\\ &\isixle \Xi^{\bq}_{\rm lb}({\boldsymbol{\eta}}^{j+1}, \bold b^{j+1}, \bold p^{j+1}, {\bq}^{j+1}) \notag\\&\isevenle \Xi({\boldsymbol{\eta}}^{j+1}, \bold b^{j+1}, \bold p^{j+1}, {\bq}^{j+1}).	
	\end{align}
	The equality $(i5)$ holds because the first-order Taylor expansion at given point $\bq^j$ as in \eqref{eq:35} and \eqref{eq:38} are tight, and the inequality $(i6)$ holds since ${\bq}^{j+1}$ is a optimal solution of ${\cal P}_{1.2}^{\bq}$. Furthermore, the inequality $(i7)$ holds since the optimal value of ${\cal P}_{1.2}^{\bq}$ is a lower bound of ${\cal P}_{1}^{\bq}$ at given ${\bq}^{j+1}$. 
	
	From \eqref{eq:54} to \eqref{eq:57}, we conclude that $\Xi({\boldsymbol{\eta}}^j, \bold b^j, \bold p^j, {\bq}^j) \le \Xi({\boldsymbol{\eta}}^{j+1}, \bold b^{j+1}, \bold p^{j+1}, {\bq}^{j+1})$, which shows that the objective value of ${\cal P}_{1}$ is non-decreasing as the number of iteration increases. Further, the objective value of ${\cal P}_{1}$ is limited by an upper bound value due to the maximum transmit power $P_u^{\rm max}$, restricted traveling time $T$, and maximum bandwidth allocation for each GU. Therefore, this guarantees for the convergence of Algorithm \ref{Alg1}.
\end{proof}



\subsubsection{Complexity Analysis} We analyze the worst-case complexity of Algorithm \ref{Alg1}. Because ${\cal P}_{1.3}^{\boldsymbol{\eta}}$ is a linear programming and it can be solved by using interior method with computational complexity is ${\cal O}\big[L_1\big( (F+1+NK)^{0.5}(N+3)^3\big)\big]$, where $L_1$, $(N+3)$, and $(F+1+NK)$ denote the number of iterations required to update the cache placement, the number of variables, the number of constraints, respectively \cite{ben2001lectures}. Further, the problem ${\cal P}_{1}^{\bold b, \bold p}$ includes $(7NK+2K+3N+1)$ linear or quadratic constraints and $(5NK+1)$ variables, thus its complexity is ${\cal O}\big[L_2\big( (7NK+2K+3N+1)^{0.5}(5NK+1)^3\big)\big]$, where $L_2$ is the number of iterations to update bandwidth and power allocation. Next, the problem ${\cal P}_{1}^{\bq}$ includes $(3NK+2K+3N+2)$ linear or quadratic constraints and $(3NK+2N+1)$ variables, thus its complexity is ${\cal O}\big[L_3\big( (3NK+2K+3N+2)^{0.5}(3NK+2N+1)^3\big)\big]$, where $L_3$ is the number of iterations to update the UAV trajectory. Then, the overall complexity of Algorithm \ref{Alg1} is ${\cal O} \big[L_4\big(L_1\big( (F+1+NK)^{0.5}(N+3)^3\big) + L_2\big( (7NK+2K+3N+1)^{0.5}(5NK+1)^3\big) +  L_3\big( (3NK+2K+3N+2)^{0.5}(3NK+2N+1)^3\big)\big)\big]$ where $L_4$
is the number of iterations until convergence.

\begin{figure}[t]
	\centering
	\includegraphics[width=9.5cm,height=10cm]{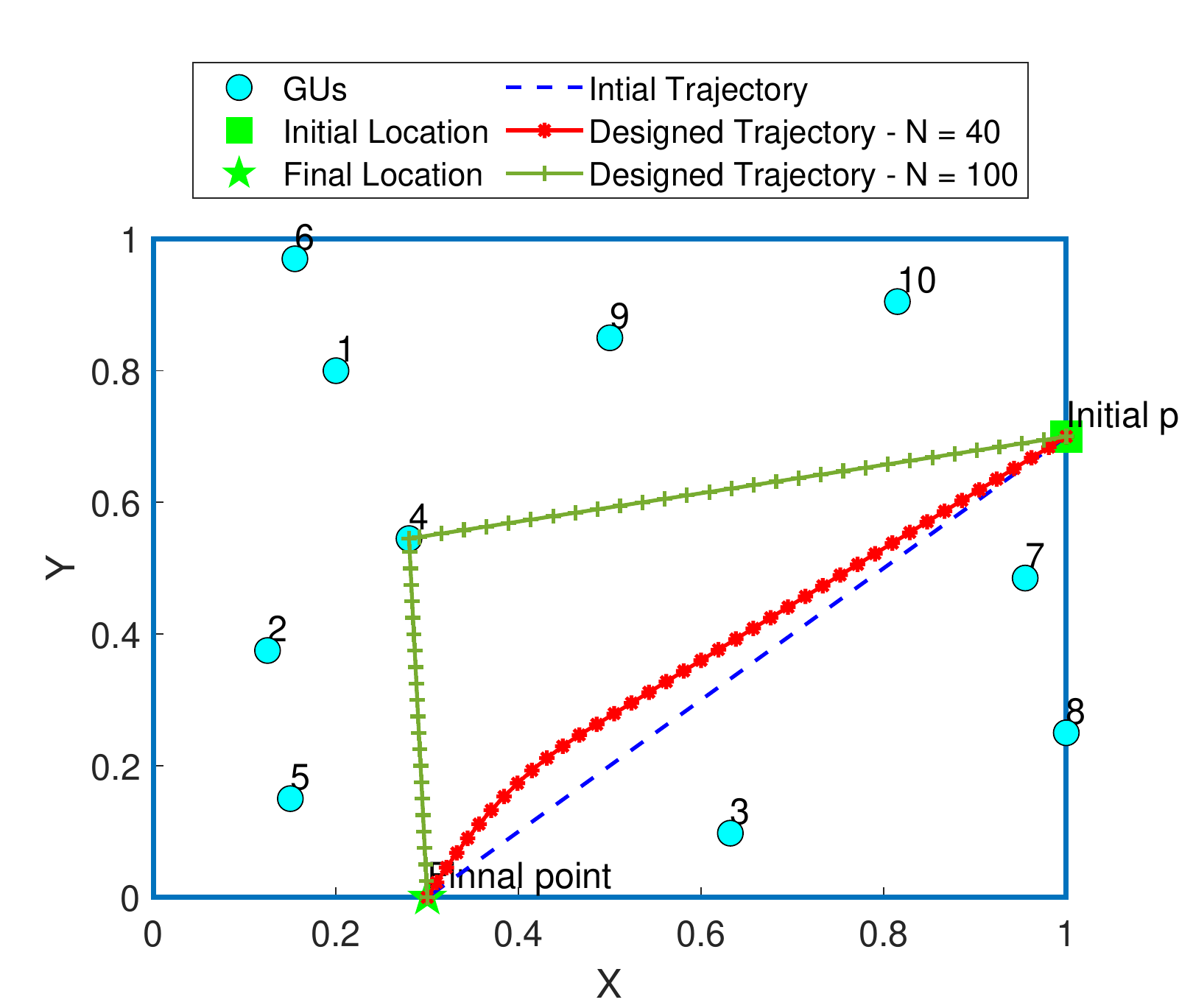}
	\caption{Geometry distribution of GUs and the UAV trajectories. }
	\label{fig:2}   
\end{figure}

\begin{figure*}[t]
	\centering   
	\subfigure[{Max-min throughput}] {\label{fig:3a}\includegraphics[width=9cm,height=7.5cm]{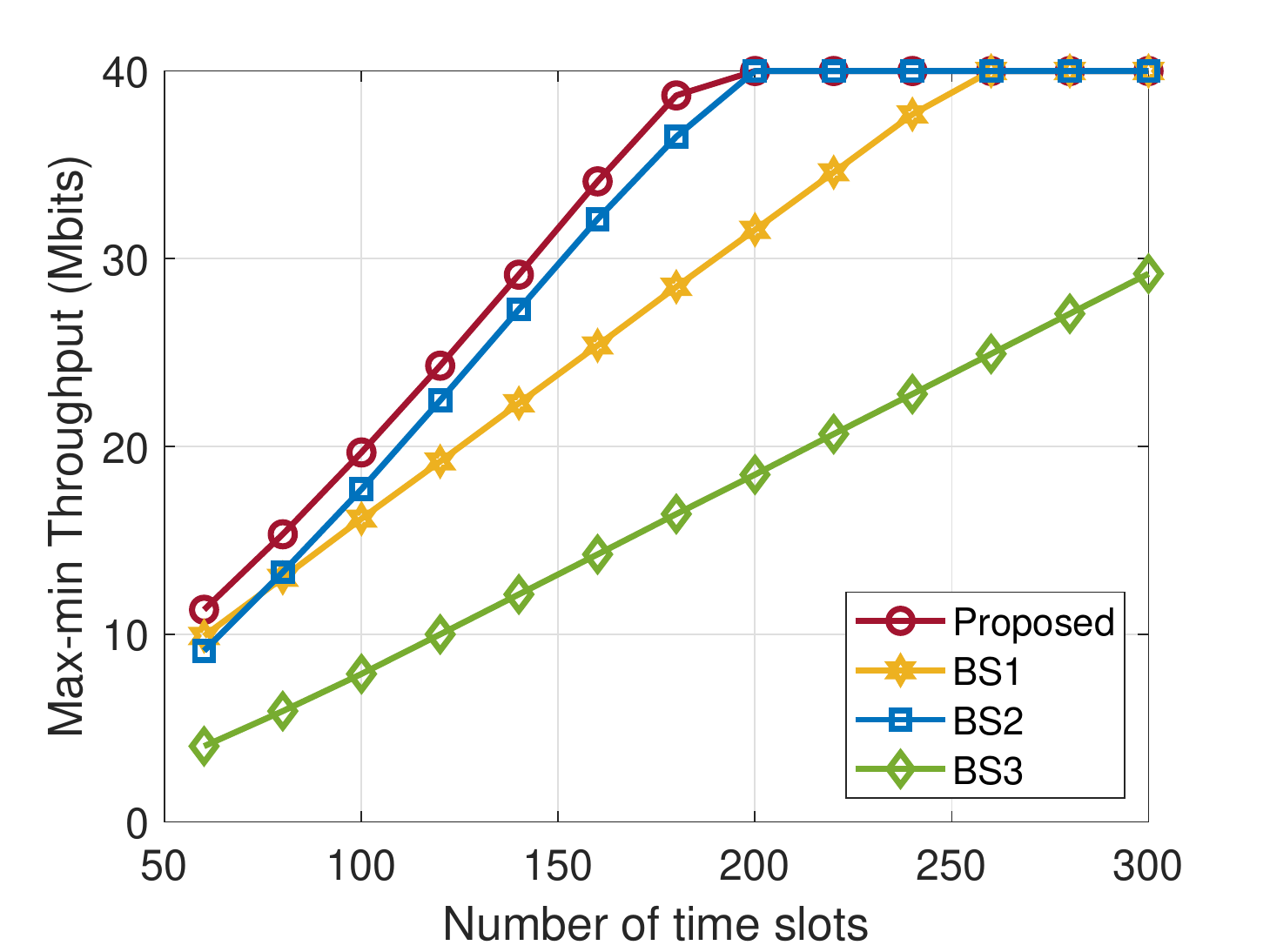}}
	\subfigure[{Total throughput}] {\label{fig:3b}\includegraphics[width=9cm,height=7.5cm]{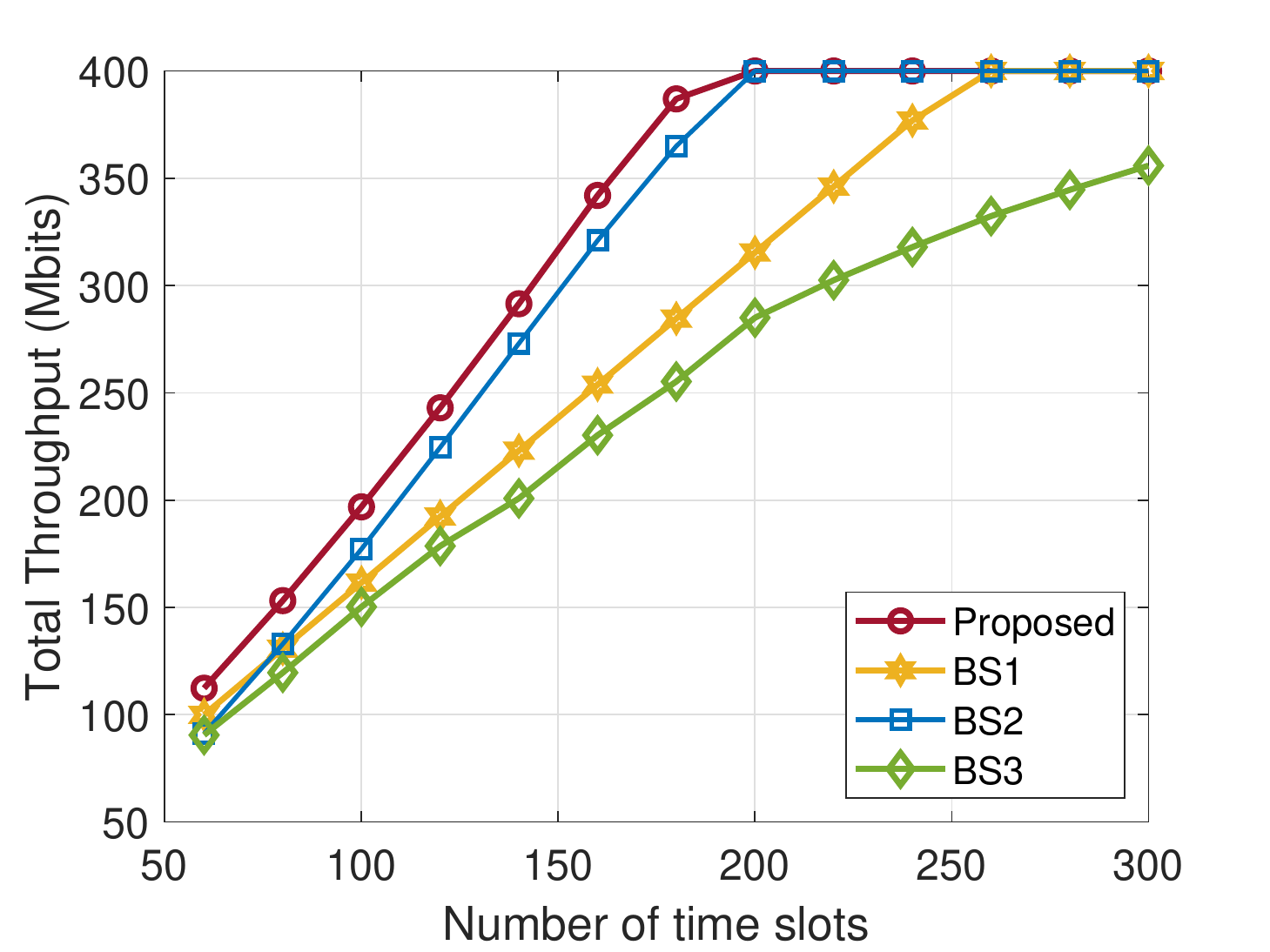}}
	\caption{{Max-min throughput and total throughput vs. number of time slots.}}
	\label{fig:3}
\end{figure*}

\begin{figure*}[t]
	\centering   
	\subfigure[{Max-min throughput}] {\label{fig:4a}\includegraphics[width=9cm,height=7.5cm]{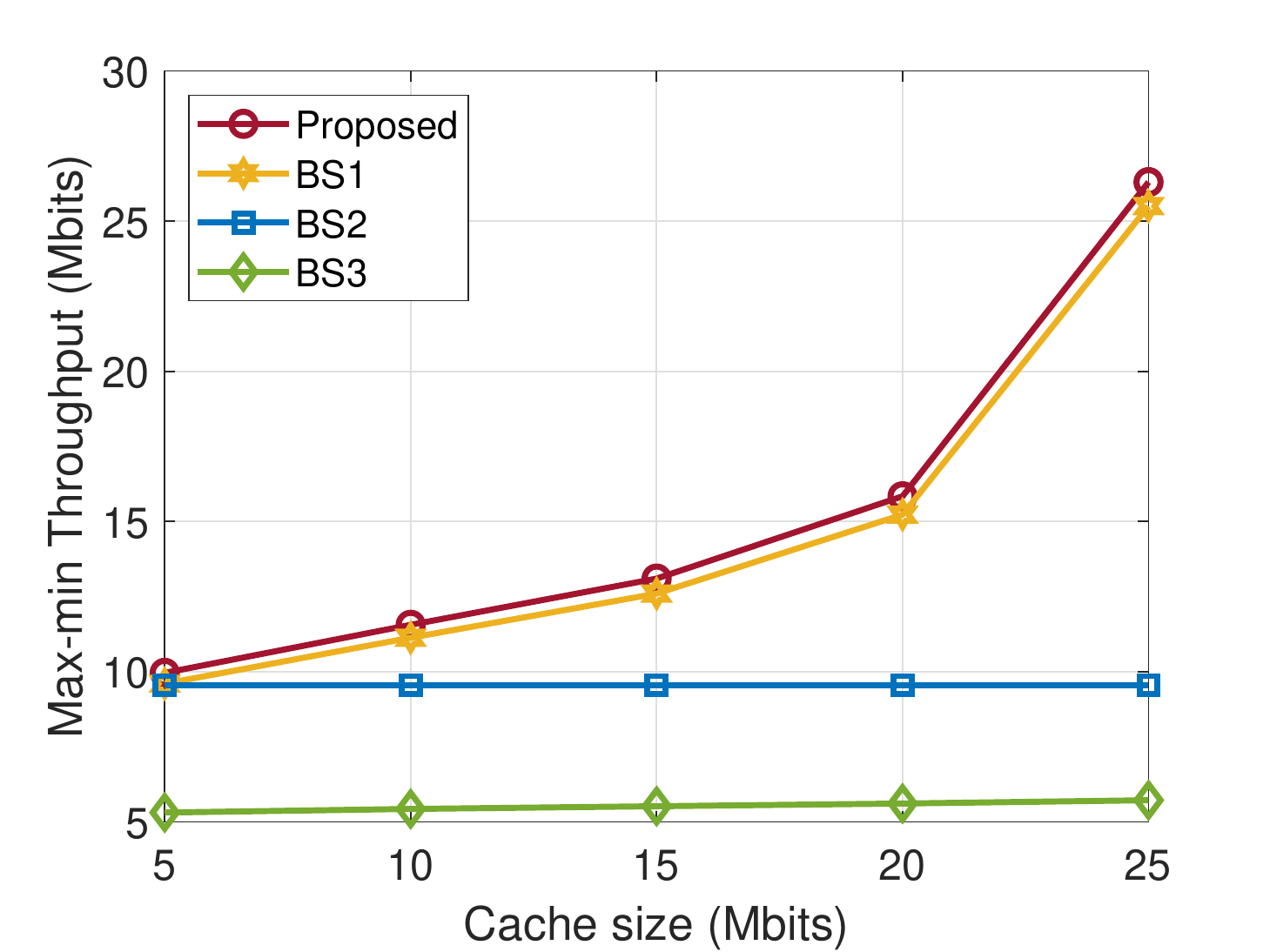}}
	\subfigure[{Total throughput}] {\label{fig:4b}\includegraphics[width=9cm,height=7.5cm]{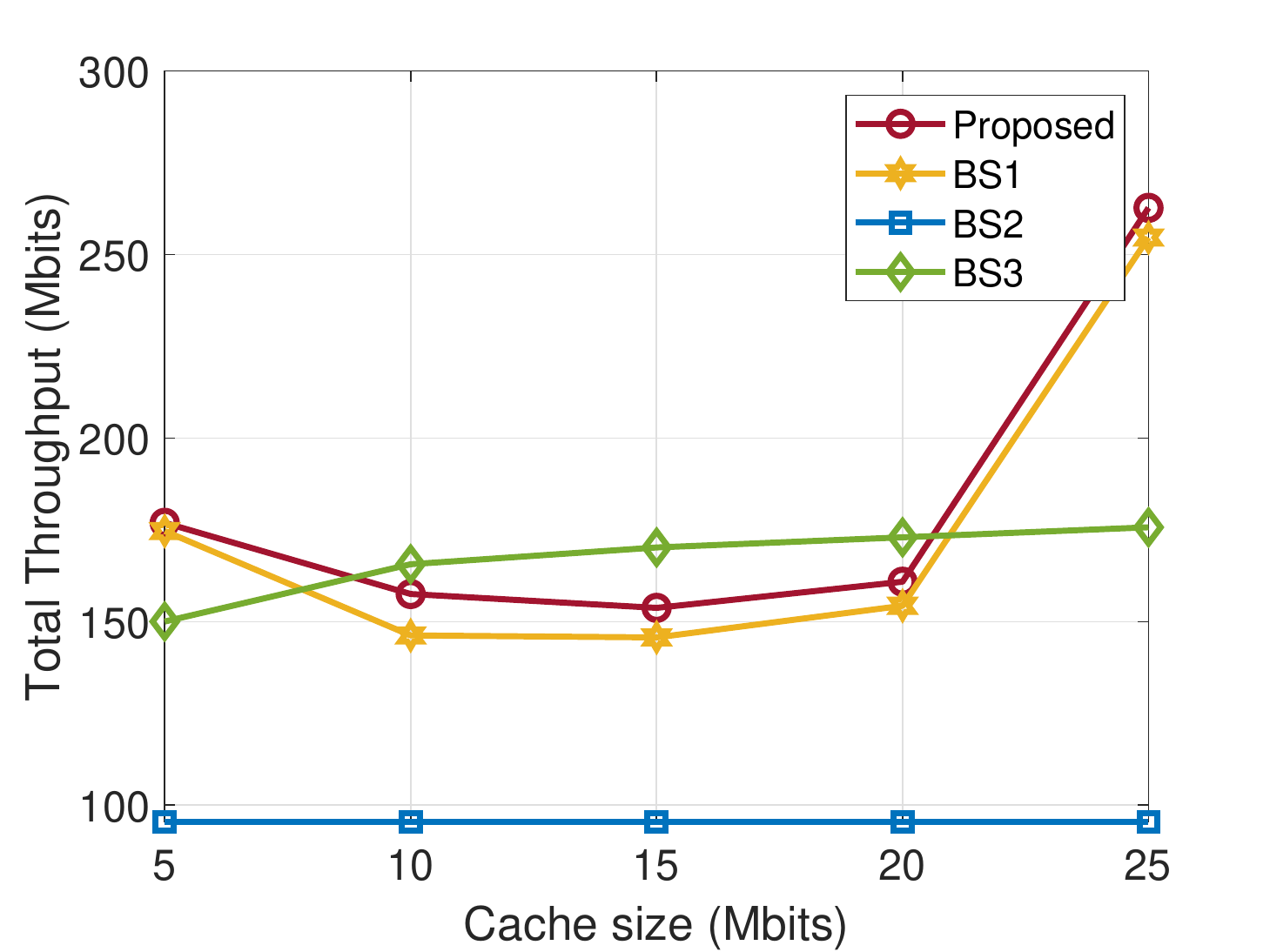}}
	\caption{{Max-min throughput and total throughput vs. cache size (Mbits).}}
	\label{fig:4}
\end{figure*}

\begin{figure*}[t]
	\centering   
	\subfigure[{Max-min throughput}] {\label{fig:5a}\includegraphics[width=9cm,height=7.5cm]{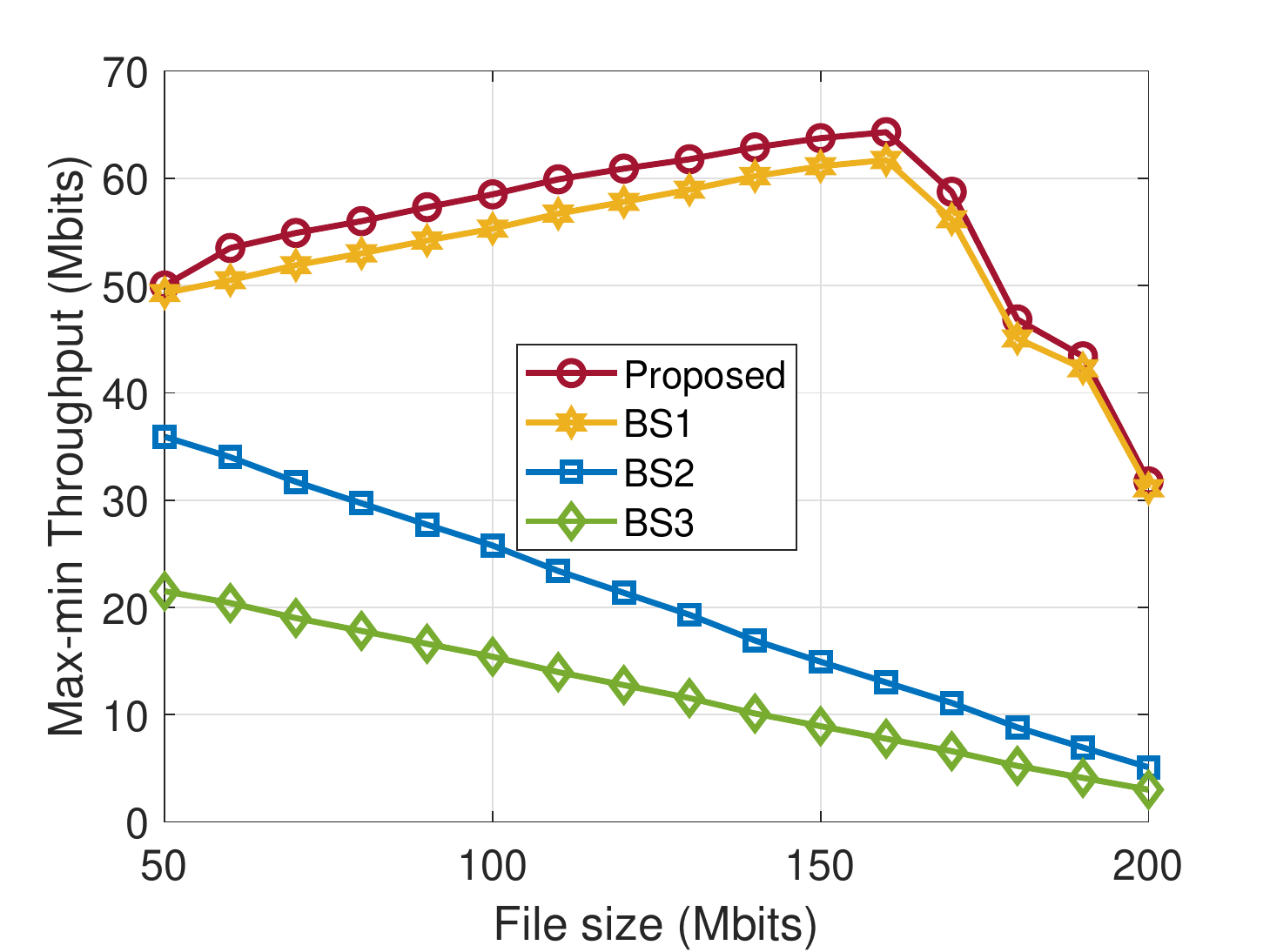}}
	\subfigure[{Total throughput}] {\label{fig:5b}\includegraphics[width=9cm,height=7.5cm]{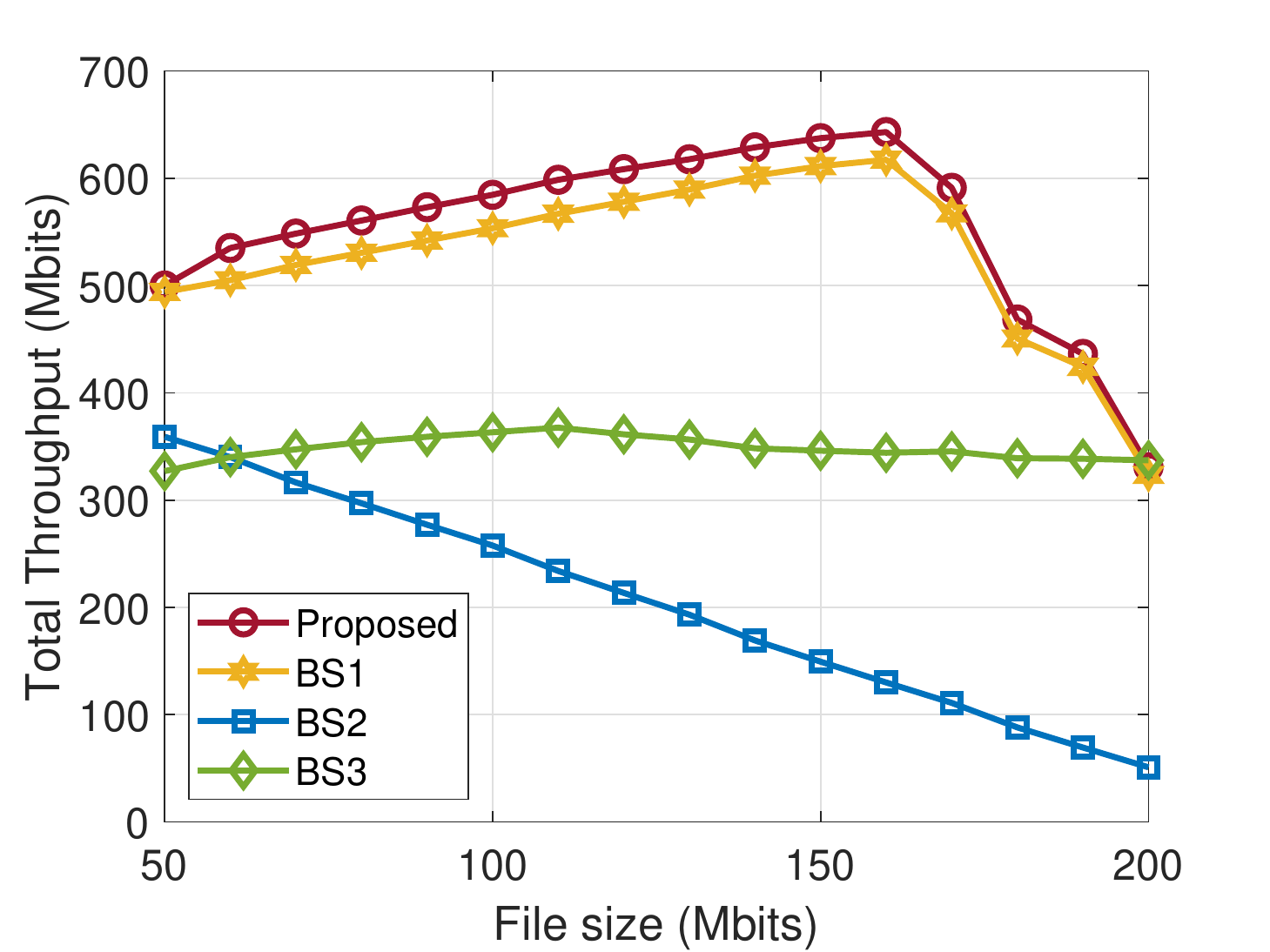}}
	\caption{Max-min throughput and total throughput vs. file size (Mbits).}
	\label{fig:5}
\end{figure*}

\begin{figure*}[t]
	\centering   
	\subfigure[{Max-min throughput}] {\label{fig:5a}\includegraphics[width=9cm,height=7.5cm]{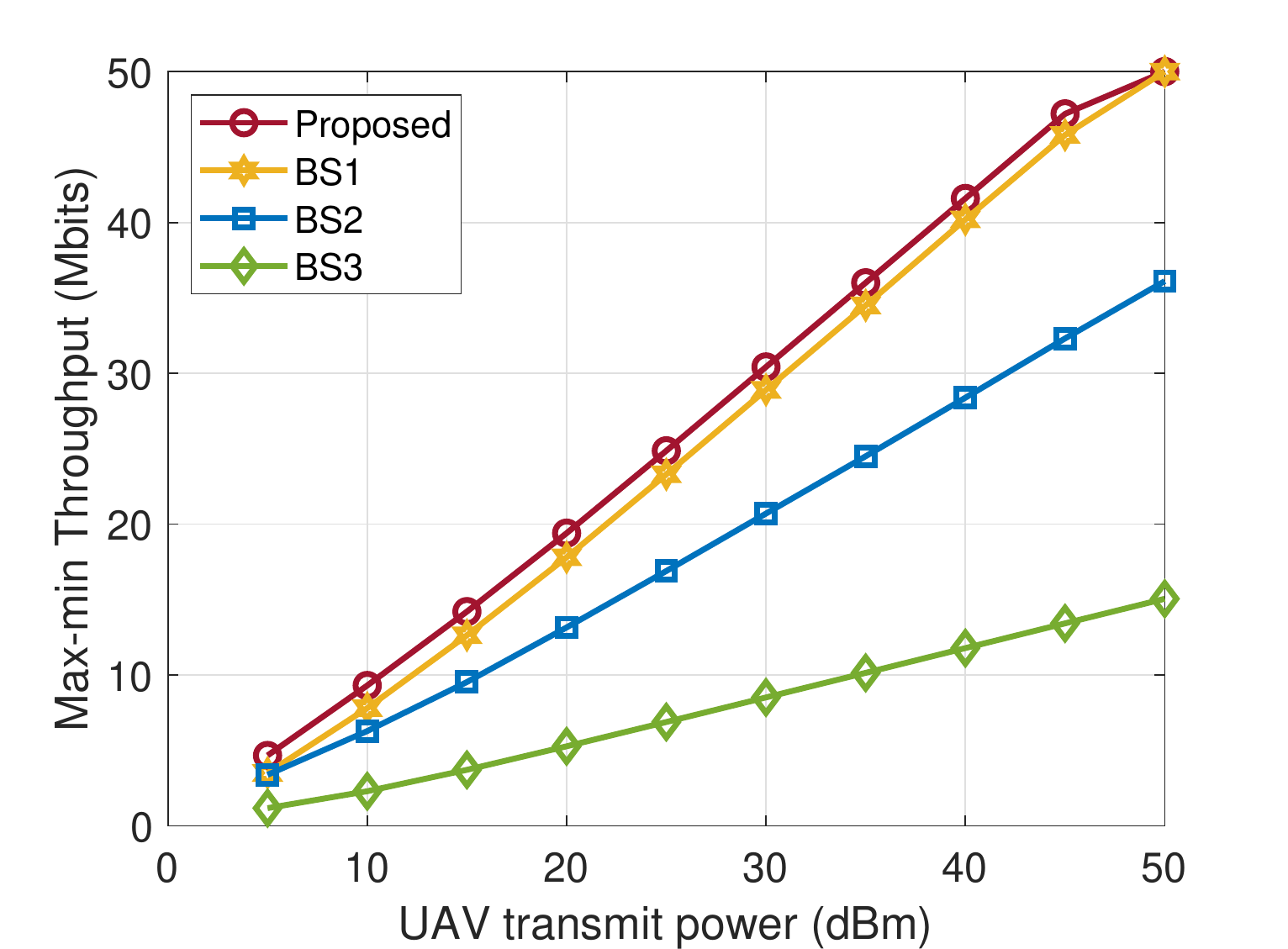}}
	\subfigure[{Total throughput}] {\label{fig:5b}\includegraphics[width=9cm,height=7.5cm]{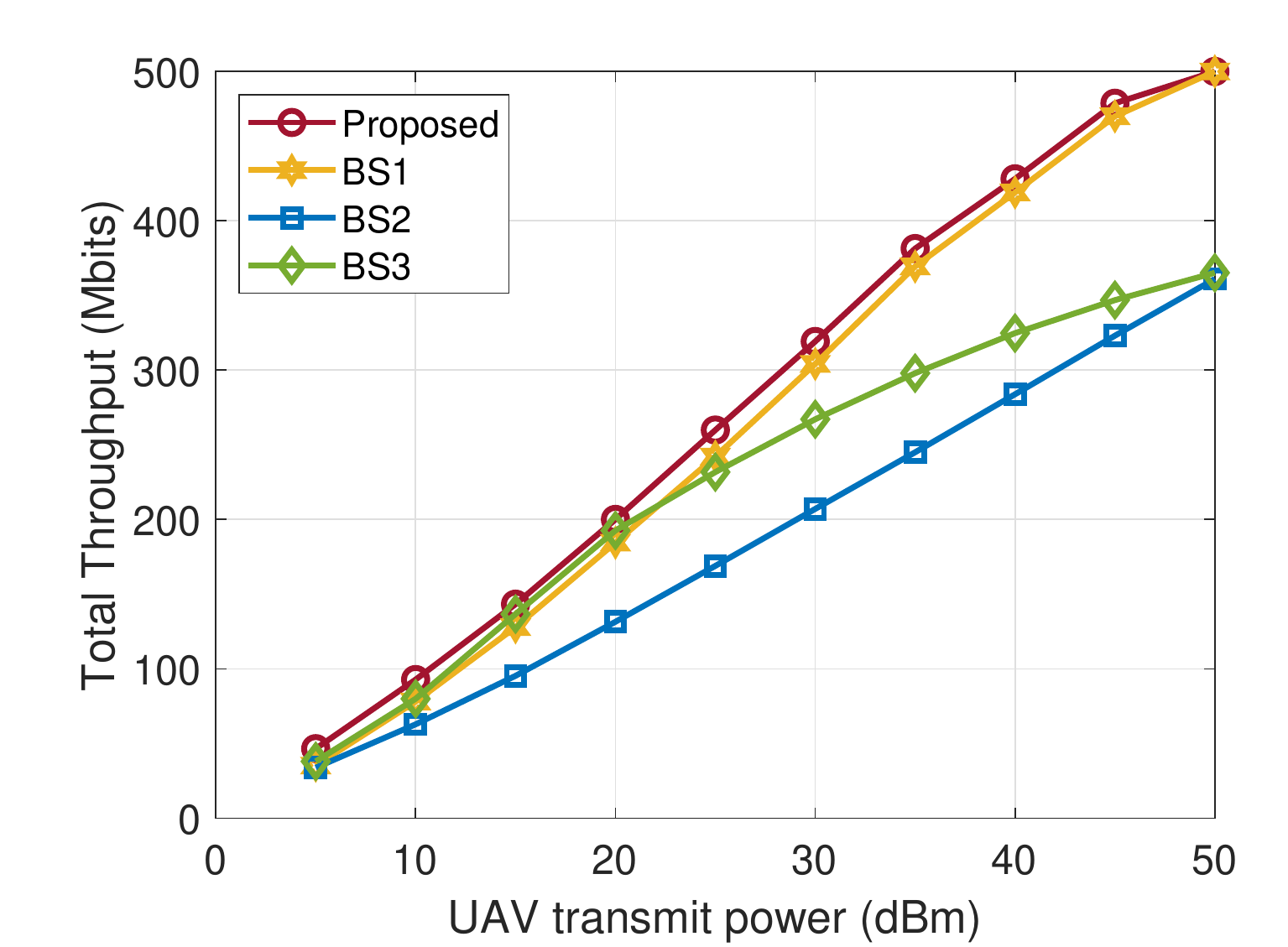}}
	\caption{{Max-min throughput and total throughput vs. UAV transmit power (dBm).} }
	\label{fig:6}
\end{figure*}

\section{Simulation Results}
\label{Sec:Num}
{In this section, numerical results are given to validate the proposed method, which jointly optimizes cache placement, resource allocation (i.e., bandwidth and transmit power), and the UAV trajectory design in satellite- and cache-aided UAV communication networks. We consider a system with one LEO satellite, one cache-enabled UAV, and $K$ GUs which is distributed in a horizontal plane, i.e., ${\rm Area}= x^2$ $(km^2)$, with $x=1$ km or 2 km.} We assume that the UAV's initial and final locations are respectively located at $\bq_{\rm I}=[1;0.7]$ km and $\bq_{\rm F}=[0.3;0]$ km. {The flight altitudes of the LEO satellite and the UAV are fixed at 2000 km and 1 km, respectively \cite{liou2008instability}.} The maximum bandwidth for the AL from  $u \to k$ is $B_{2k}=20$ Mhz. Therefore, the AWGN power is $\sigma^2=-174+10\log_{10}B\simeq -101$ dBm. {The maximum transmit powers of the satellite and the UAV are respectively set as $p_{1k}^n= 49.03$ dBm and $P_u^{\rm max}$ is ranging from 5 to 40 dBm \cite{LiSatUAV2020,JiUAVCache}.} Without other stated, other parameters are set as: path-loss exponent $\alpha=2$, the maximum UAV velocity $V_{\rm max}=50$ m/s, one time slot duration $\delta_t=0.5$ second, channel gain at the reference distance $\beta_0=$ -40 dB, total number of file $F= 30$ files, one file size $Q = 40$ Mbits, UAV's cache size $S=$ 10 files, Zipf skewness factor $\varrho$ = 0.8 \cite{ThangFDCache}. The error tolerance of iterative algorithm is set to $\epsilon=10^{-4}$. {The LEO satellite's orbital velocity is set to 6.9005 km/s based on \eqref{eq:sat_vel}.} The initial point of the LEO satellite is $[-345;0]$ km. {The penalty parameter $\kappa$
is initialized to 0.1 and incremented as $\kappa = 1.1 \kappa$ until $ \kappa \le 10$.} To show the superiority of our design, we compare the proposed scheme with the following benchmark schemes:
\begin{itemize}
	\item {Benchmark scheme 1 (BS1): UAV bandwidth and transmit power optimization with caching capability and fixed trajectory, i.e., a linear trajectory from initial to final locations \cite{JiUAVCache}.}
	
	\item Benchmark scheme 2 (BS2): UAV bandwidth, transmit power, and trajectory optimization without caching capability \cite{ThangFDCache}.
	
	\item Benchmark scheme 3 (BS3): UAV trajectory optimization with caching capability and fixed resource allocation, i.e., $b_{2k}^n \triangleq \frac{1}{K}$, $p_{2k}^n \triangleq \frac{P_u^{\rm max}}{K}$ \cite{JiUAVCache}.
\end{itemize}

{Fig. \ref{fig:2} plots the geometric distribution of GUs and the UAV trajectories for different traveling times, i.e., $N$ equals 40 and 100 time slots, with $B_{1k}=50$ Mbits. First, we observe that the UAV flies from the initial point to the furthest point where it can transmit information to GUs then back to the final point.} In contrast to the \cite{JiUAVCache} reference, in which the authors assume that the transmitter only serves up to one requester at a time slot, which is impractical and inefficient. In this work, we assume that the UAV can serve multiple GUs simultaneously to improve network performance, i.e., max-min throughput. {It leads to the fact that the UAV tends to fly to a point that keeps a relative distance to all GUs instead of flying to each GU's location as in \cite[Figure 7]{JiUAVCache}.} Further, it can be explained that if the UAV tries to fly closer to some GUs, thus it only helps to improve the throughput for these GUs while other users' performance is degraded. Thus, it does not guarantee the fairness between all GUs, which is the main purpose of this work. {Furthermore, we also find that by increasing the total flight time $T$, the UAV trajectory range becomes larger as it has more time to get closer to each GU, which improves the max-min throughput.}

{Fig. \ref{fig:3} describes max-min throughput and total throughput as functions of total flight time or the number of time slots $N$, where $B_{1k}$ = 50 Mbits, $P_u^{\rm max}$ = 15 dBm, $Q=$ 40 Mbits.} First, we observe that all schemes' performance increases significantly with larger values of the number of time slots. {That is because the higher the traveling time, the more data transmission rate per GU can be obtained.} Therefore, the minimum throughput value is improved. It can be seen that the proposed method always achieves the best performance as compared with other schemes. {Moreover, as the travel time is large enough, the performance of BS1 and BS2 methods can achieve the same performance as the proposed method. For example, the minimum throughput of BS1 and BS2 can reach 40 Mbits when the number of time slots is greater than 260 and 200, respectively; while BS3 always has the lowest value.} In Fig. \ref{fig:3b}, the total throughput is illustrated as a function of traveling time. We can see that Fig. \ref{fig:3b} has similar properties as Fig. \ref{fig:3a}. It shows the total throughput that the UAV successfully transfers to all GUs. {Similar to Fig. \ref{fig:3a}, when the number of time slot is lower than 80, the BS1 is better than the BS2 scheme.} Further, the proposed scheme is still the best one. {Specifically, the throughput performance of the proposed algorithm can serve up to 387 Mbits and the BS1 can achieve less than 26.46 $\%$, i.e., 284.6 Mbits, when $N=180$. In comparison, the BS2 and BS3 scheme imposes a 365 and 255.24 Mbits of total throughput, respectively.}

{In Fig. \ref{fig:4}, we study the influences of cache size, i.e., the number of files that can be stored at the UAV, on the network performance, where $Q=$ 60 Mbits, $N=$ 80, $B_{1k}=$ 20 Mbits, $x=$ 2 km. From the results, it can be seen that the proposed algorithm significantly enhances the minimum and total throughput compared to the benchmarks for all cache sizes. It is expected since the UAV has stored part of the requested files in their memory.} Thus, it does not need to demand from the satellite, which incurs more delay. As a result, the UAV has more time to communicate with GUs, and a higher data rate can be obtained. {For instance,  the minimum throughput of the proposed and BS1 schemes get 15.84 and 15.2 Mbits respectively at cache size equals 20 Mbits. Moreover, the BS2 and BS3 impose 9.55 and 5.61 Mbits, respectively. One more noticeable point in Fig. \ref{fig:4} is the performance of the BS2 scheme independent of the cache size values.} This is because the BS2 method is implemented without considering caching capability at the UAV. {Notably, when the cache size is ranging from 10 to 20, the BS3 method can achieve better total throughput compared to other schemes.}

{In Fig. \ref{fig:5}, we plot the max-min throughput as a function of the file size (in Mbits), where $N$ = 120, $x$ = 2 km. From the results, it is shown that the proposed scheme greatly improves the performance compared to the references for all file sizes. Specifically, at demanded data equals 160 Mbits, the max-min throughput value of proposed method is 64.4 Mbits, and the BS1 achieve less than 4.5 $\%$, i.e., 61.5 Mbits. Whereas the BS2 and BS3 impose 13 and 7.75, respectively. Notably, we also find that the performance of the proposed method and BS1 optains the maximum value at the optimal file size, then it will decreases. While the performance of BS2 and BS3 decreases dramatically. This shows the superiority of the resource and cache placement optimization in the proposed scheme and BS1 compared to BS2 and BS3. Nevertheless, for a given resource (i.e., bandwidth, transmit power, UAV speed, and total flight time), when the file size is too large (i.e., file size is larger than 160 Mbits), the performance of the proposed scheme and BS1 decreases significantly. This is due to the fact that the larger the file size, the more latency is required to transmit the requested data from the satellite to the UAV on the backhaul link. Therefore, the UAV has less time to transmit data to GUs. }

{Fig. \ref{fig:6} presents the results corresponding to the max-min throughput versus UAV transmit power $P_u^{\rm max}$, where $N=$ 100, $Q$ = 50 Mbits, $S=$ 10. As illustrated, system performance is enhanced by increasing the power budget, i.e., $P_u^{\rm max}$. That is due to the fact that the higher the transmit power, the higher the data transmission rate can be obtained, as shown in Eqs. \eqref{eq:8} and \eqref{eq:9}. Furthermore, the proposed scheme provides a better result in comparison with the benchmark ones when the transmit power is small, i.e., $P_u^{\rm max} \le 50$ dBm. Nevertheless, the BS1 method can obtain the same max-min throughput as the proposed method when $P_u^{\rm max}$ value is large, e.g., $P_u^{\rm max} \ge 50$ dBm. In this scenario, the UAV should operate in the BS1 scheme due to its simplicity and fast employment.} However, inherent restrictions of UAV is the limitation on size, weight, and power capability (SWAP). Thus, the proposed scheme is the best one that can adapt to all scenarios in practice.

\section{{Conclusion and Future Directions}}
\label{Sec:Con}
This paper studied LEO satellite- and cache-assisted UAV communications. Especially, we proposed a novel system model that jointly considers UAV, caching, and satellite communications in content delivery networks. In this context, we maximized the minimum achievable throughput among GUs via joint optimization of the cache placement, resource allocation, and UAV trajectory. Because the formulated problem was in the form of MINLP, it is difficult to solve directly. {Thus, we transformed the original problem into a solvable form using an alternative algorithm based on BCD method and SCA techniques. Extensive simulation results showed that our proposed algorithm improves up to 26.64$\%$, 79.79$\%$, and 87.96$\%$ in the max-min throughput compared to BS1, BS2, and BS3, respectively.} Notably, in the cases such as the UAV transmit power or the total traveling time size is large enough, the UAV should operate in fixed trajectory mode for a simple implementation. 

{The outcome of this work will motivate future works in satellite- and cache-assisted UAV wireless systems. One possible problem is to extend this work to a multi-UAV system, which imposes higher complexity but might further improve the network performance.}
{\section{Acknowledgement}
\label{ACK}
This research is supported by the Luxembourg National Research Fund under project FNR CORE ProCAST, grant C17/IS/11691338 and FNR CORE 5G-Sky, grant C19/IS/13713801.}


\bibliographystyle{IEEEtran}
\bibliography{IEEEfull}


\begin{IEEEbiography}[{\includegraphics[width=1in,height=1.25in,clip,keepaspectratio]{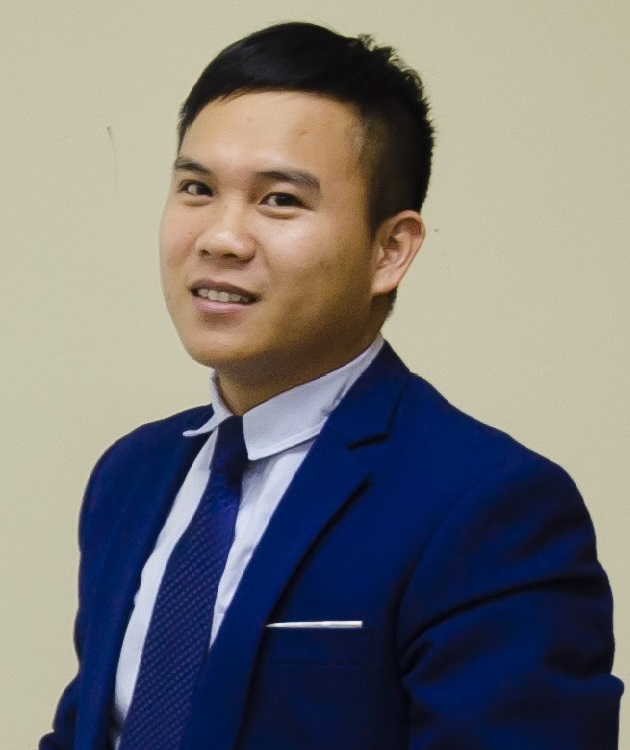}}]
	{Dinh-Hieu Tran,} (S'20) was born in 1989 in Vietnam, growing up up in Gia Lai. He received a BE degree in Electronics and Telecommunication Engineering Department from Ho Chi Minh City University of Technology, Vietnam, in 2012. In 2017, he completed an MS degree with honours in Electronics and Computer Engineering at Hongik University, South Korea. He is currently pursuing a PhD at the Interdisciplinary Centre for Security, Reliability and Trust (SnT), University of Luxembourg, under the supervision of Prof. Symeon Chatzinotas and Prof. Bj\"orn Ottersten. His research interests include UAVs, IoTs, mobile edge computing, caching, backscatter, B5G for wireless communication networks. He is a recipient of the IS3C 2016 best paper award.
\end{IEEEbiography}

\begin{IEEEbiography}[{\includegraphics[width=1in,height=2in,clip,keepaspectratio]{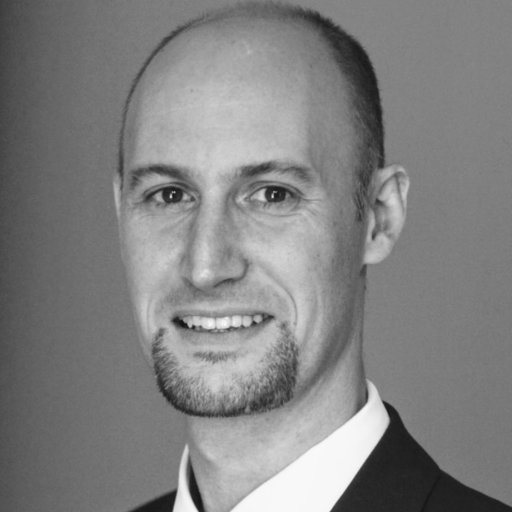}}]
	{Symeon Chatzinotas,} (S'06-M'09-SM'13) is currently Full Professor / Chief Scientist I in Satellite Communications and Head of the SIGCOM Research Group at SnT, University of Luxembourg. He coordinates research activities in communications and networking, acting as a PI in over 20 projects and is the main representative for 3GPP, ETSI, DVB.
	In the past, he worked as a Visiting Professor at the University of Parma, Italy, lecturing on 5G Wireless Networks. He was involved in numerous R$\&$D projects for NCSR Demokritos, CERTH Hellas and CCSR, University of Surrey.
	He was co-recipient of the 2014 IEEE Distinguished Contributions to Satellite Communications Award and Best Paper Awards at EURASIP JWCN, CROWNCOM, ICSSC. He has (co-)authored more than 450 technical papers in refereed international journals, conferences and scientific books.
	He is currently on the editorial board of the IEEE Transactions on Communications, IEEE Open Journal of Vehicular Technology, and the International Journal of Satellite Communications and Networking. 
\end{IEEEbiography}

\begin{IEEEbiography}[{\includegraphics[width=1in,height=1.25in,clip,keepaspectratio]{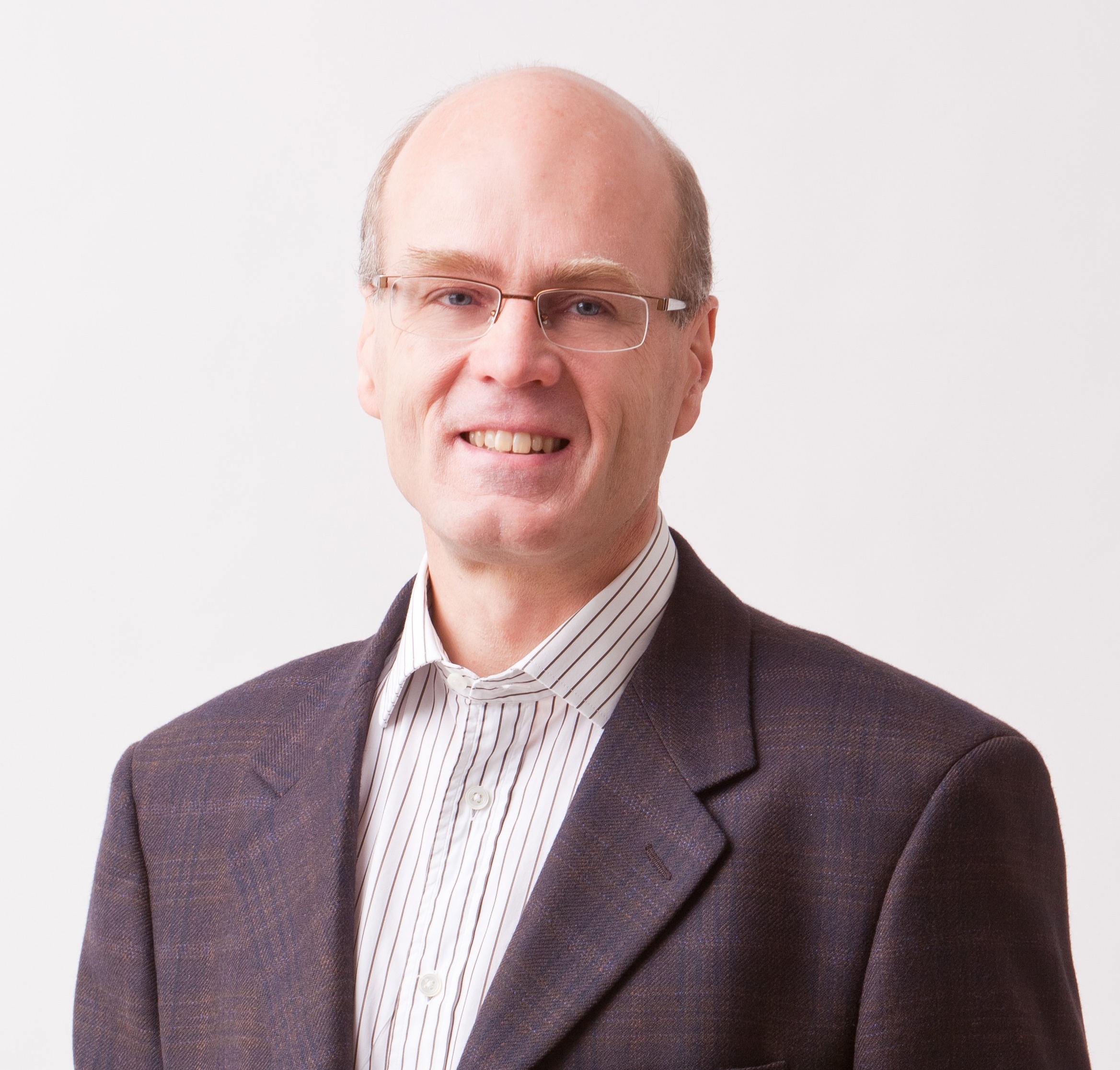}}]
	{Bj\"orn Ottersten,} (S'87-M'89-SM'99-F'04) was born in Stockholm, Sweden, in 1961. He received
	his MS degree in electrical engineering and applied physics from Linköping University, Linköping, Sweden, in 1986, and a PhD degree in electrical engineering from Stanford University, Stanford, CA, USA, in 1990. He has held research positions with the Department of Electrical Engineering at Linköping
	University, the Information Systems Laboratory at Stanford University, the Katholieke Universiteit Leuven in Belgium, and the University of Luxembourg. From 1996 to 1997, he was the Director of Research at ArrayComm, Inc., a start-up in San Jose, CA, USA, based on his patented technology. In 1991, he was appointed professor of signal processing at the Royal Institute of Technology (KTH) in Stockholm, Sweden. Dr. Ottersten has been Head of the Department for Signals, Sensors, and Systems, KTH, and Dean of the School of Electrical Engineering, KTH. He is currently the Director for the Interdisciplinary Centre for Security, Reliability and Trust at the University of Luxembourg.
	He is a recipient of the IEEE Signal Processing Society Technical Achievement Award and been twice awarded the European Research Council advanced research grant. He has co-authored journal papers which received the IEEE Signal Processing Society Best Paper Award in 1993, 2001, 2006, 2013, and 2019, and eight IEEE conference papers best paper awards. He has been a board member of IEEE Signal Processing Society and the Swedish Research Council and currently serves on the boards of EURASIP and the Swedish Foundation for Strategic Research. He has served as an Associate Editor for the IEEE TRANSACTIONS ON SIGNAL PROCESSING and the Editorial Board of the IEEE Signal Processing Magazine. He is currently a member of the editorial boards of the IEEE Open Journal of Signal Processing, EURASIP Signal Processing Journal, EURASIP Journal of Advanced Signal Processing and Foundations and Trends of Signal Processing. He is a fellow of EURASIP.	
\end{IEEEbiography}	
\end{document}